\documentclass[
  reprint,
  amsmath,
  amssymb,
  aps,
  pra,
  twocolumn,
]{revtex4-2}

\usepackage{amsthm}
\usepackage{booktabs}
\usepackage{macros}
\usepackage{subfiles}
\usepackage[ruled,vlined]{algorithm2e}

\usepackage{cleveref}
\crefname{apptab}{table}{tables}
\Crefname{apptab}{Table}{Tables}

\usepackage{listings}
\usepackage{xcolor}

\definecolor{codegreen}{rgb}{0,0.6,0}
\definecolor{codegray}{rgb}{0.5,0.5,0.5}
\definecolor{codepurple}{rgb}{0.58,0,0.82}
\definecolor{backcolour}{rgb}{0.95,0.95,0.92}

\lstdefinestyle{mystyle}{
    backgroundcolor=\color{backcolour},   
    commentstyle=\color{codegreen},
    keywordstyle=\color{magenta},
    numberstyle=\tiny\color{codegray},
    stringstyle=\color{codepurple},
    basicstyle=\ttfamily\scriptsize,
    breakatwhitespace=false,         
    breaklines=true,                 
    captionpos=b,                    
    keepspaces=true,                 
    numbersep=5pt,                  
    showspaces=false,                
    showstringspaces=false,
    showtabs=false,                  
    tabsize=2
}

\definecolor{darkred}{rgb}{.7,0.0,.5}

\usepackage{ulem}

\makeatletter
\newtheorem*{rep@theorem}{\rep@title}
\newcommand{\newreptheorem}[2]{%
\newenvironment{rep#1}[1]{%
 \def\rep@title{#2 \ref{##1} (restated)}%
 \begin{rep@theorem}}%
 {\end{rep@theorem}}}
\makeatother

\newtheorem{theorem}{Theorem}
\newtheorem{corollary}{Corollary}
\newtheorem{definition}[theorem]{Definition}
\newtheorem{lemma}[theorem]{Lemma}
\newtheorem{notation}[theorem]{Notation}
\newtheorem{proposition}[theorem]{Proposition}
\newreptheorem{lemma}{Lemma}
\newreptheorem{proposition}{Proposition}

\begin{document}

\title{Quantum Approximate Optimization of Integer Graph Problems \\ 
and Surpassing Semidefinite Programming for Max-k-Cut}
\author{Anuj~Apte}
\affiliation{\addJPMC}
\author{Sami~Boulebnane}
\affiliation{\addJPMC}
\author{Yuwei~Jin}
\affiliation{\addJPMC}
\author{Sivaprasad~Omanakuttan}
\affiliation{\addJPMC}
\author{Michael~A.~Perlin}
\affiliation{\addJPMC}
\author{Ruslan~Shaydulin}
\email{ruslan.shaydulin@jpmchase.com}
\affiliation{\addJPMC}

\begin{abstract}

Quantum algorithms for binary optimization problems have been the subject of extensive study.
However, the application of quantum algorithms to integer optimization problems remains comparatively unexplored.
In this paper, we study the Quantum Approximate Optimization Algorithm (QAOA) applied to integer problems on graphs, with each integer variable encoded in a qudit. 
We derive a general iterative formula for depth-$p$ QAOA expectation on high-girth $d$-regular graphs of arbitrary size. 
The cost of evaluating the formula is exponential in the QAOA depth $p$ but does not depend on the graph size. Evaluating this formula for Max-$k$-Cut problem for $p\leq 4$, we identify parameter regimes ($k=3$ with degree $d \leq 10$ and $k=4$ with $d \leq 40$) in which QAOA outperforms the Frieze-Jerrum semi-definite programming (SDP) algorithm, which provides the best worst-case guarantee on the approximation ratio. To strengthen the classical baseline, we introduce a new heuristic algorithm based on the degree-of-saturation that achieves strong results on the \texttt{GSet} benchmark with quasi-linear runtime in the number of edges. It empirically outperforms both the Frieze-Jerrum algorithm and shallow-depth QAOA on regular graphs. Nevertheless, we provide numerical evidence that QAOA may overtake this heuristic at depth $p\leq 20$. Our results show that moving beyond binary to integer optimization problems can open up new avenues for quantum advantage.


\end{abstract}

\maketitle

\section{Introduction}

Quantum algorithms are emerging as a promising new avenue for optimization problems, and it is crucial to identify problems with a potential for quantum advantage in either runtime or solution quality~\cite{Abbas2024}. Much of the research in quantum optimization has focused on binary optimization since each binary variable can be encoded into a qubit with minimal overhead~\cite{Ebadi2022}.
However, many optimization problems are more naturally formulated with integer rather than binary variables.
Examples include optimizing a portfolio of stocks in which asset allocation is integer valued, the problem of assigning workers to jobs at the lowest cost, and the Knapsack problem, which involves selecting a subset of items to maximize value without exceeding a weight capacity~\cite{Kellerer2004, herman2023quantum}.

A paradigmatic graph problem with integer variables is the Max-$k$-Cut problem, where the goal is to partition the vertices of a graph into $k$ disjoint subsets and maximize the number of edges connecting vertices in different subsets~\cite{Frieze1997} (see \cref{fig:fig_set_up_Max_k_cutting}\textbf{(A)}). This problem is APX-hard, which means that unless P=NP there is no polynomial-time approximation scheme that yields a cut arbitrarily close to the optimum solution in the worst case~\cite{Papadimitriou1991}. The computational hardness highlights the need to develop efficient polynomial-time quantum and classical algorithms that can achieve good approximation ratios. The algorithm with the highest provable approximation ratio for this problem in the worst case is the Frieze–Jerrum algorithm, which uses semi-definite programming (SDP), extending the seminal algorithm of Goemans–Williamson for Max-2-Cut~\cite{Frieze1997, goemans1995}.

The quantum approximate optimization algorithm (QAOA)~\cite{hogg2000quantumoptimization,farhi2014quantumapproximateoptimizationalgorithm} is a promising quantum algorithm for discrete optimization.
QAOA prepares a simple initial state and applies $p$ alternating layers of two parameterized operations: a \textit{phaser} that encodes the cost function of the optimization problem and a \textit{mixer} that generates a quantum walk within solution space (see \cref{fig:fig_set_up_Max_k_cutting}\textbf{(B)}). The performance of QAOA depends on
the number of times these operations are repeated, which
is called the QAOA depth $p$. The parameters of the operations in each layer are classically optimized.
The application of QAOA to Max-2-Cut has been studied extensively~\cite{qaoa_maxcut_high_depth, Wang2018, lower_bounding_max_cut_qaoa, Marwaha2021, Shaydulin2023, he2025non,Medvidovi2021}.
For Max-$k$-Cut, QAOA applied in recursive fashion has been shown to achieve higher approximation ratios than semidefinite programming (SDP) rounding on graphs with up to 300 vertices~\cite{bravyi2022hybrid, newman2018}.

The theoretical analysis of this work is  based on an explicit formula for evaluating the expected cost function of Max-$k$-Cut QAOA on a classical computer. 
The derivation of classically computable formulae for QAOA expectation values has been the subject of many studies~\cite{farhi2022quantum,qaoa_maxcut_high_depth,lower_bounding_max_cut_qaoa,qaoa_ksat,qaoa_spin_glass_models,qaoa_spiked_tensor}. 
In all cases, the analysis of the QAOA expectation starts by writing it as a computational basis spin path integral, which then may be further transformed via two broad families of techniques. 
The first method relies on \textit{generalized multinomial theorems}~\cite{farhi2022quantum,qaoa_spin_glass_models,qaoa_ksat,qaoa_spiked_tensor}, and is applicable to random optimization problems with a cost function defined as a sum of independent clauses. 

A second family of techniques applies to constraint satisfaction problems (such as Max-$k$-Cut) defined over a sparse and locally treelike constraint graph. By the locality of QAOA, the QAOA expected cost can be computed from a state defined over a tree subgraph. A local QAOA expectation under a state defined over a tree can in turn be framed as a tree tensor network contraction, with a classical run-time that is exponential in $p$. Further improvements, in particular removing the contraction cost's dependence on the degree, are possible by taking advantage of the problem's cost function symmetries~\cite{qaoa_maxcut_high_depth}, or by exploiting the structure of the QAOA ansatz~\cite{lower_bounding_max_cut_qaoa}. We focus on this second family of techniques to analyze QAOA expectation values. Note that both of these approaches are only capable of predicting QAOA performance, but the circuit has to be run on a quantum computer to obtain candidate solutions to the problem. 

In this work, we derive an explicit iterative formula for the QAOA expectation value of Max-$k$-Cut objective on high-girth graphs. Since the formula does not depend on the graph instance, it enables us to optimize the QAOA parameters so that they work well across all graphs with a given degree. Our analysis builds on a framework for evaluating QAOA performance on the Sherrington–Kirkpatrick (SK) model at infinite size, and the Max-Cut problem on regular graphs at high girth~\cite{farhi2022quantum, qaoa_maxcut_high_depth}. By deriving instance-independent formulas that evaluate qudit QAOA performance at high-girth, we identify problems where the quantum algorithm can match or surpass the best classical methods. This generalization enables the application of QAOA to optimization problems with integer‑valued variables, thereby substantially broadening the framework. 

Our main technical and empirical contributions are as follows:

\textit{Iterative Formula for Qudit QAOA:} We generalize the light-cone/iteration-based analytical framework for QAOA from qubits to $k$-level qudits for Max-$k$-Cut covering arbitrary $k$, degree $d$, and depth $p$ for graphs with girth at least $2p+2$.
We derive an iterative formula that evaluates the high-girth QAOA expectation value in time $\mathcal{O}\left(p\,k^{4p+4}\right)$. For edge costs depending only on label differences, a simplification based on the Hadamard transform reduces the runtime to $\mathcal{O}\left(p^2\,k^{2p+2}\log k\right)$. 

\textit{Outperforming Semi-definite Programming:} Using the iteration-based framework, we optimize QAOA parameters and compute cut fractions across $(k,d,p)$. On random $d$-regular graphs, we find parameter regimes where QAOA at $p=4$ surpasses the empirical average-case performance of Frieze–Jerrum SDP algorithm (e.g., $k=3$ with $d\!\leq\!10$, and $k=4$ with $d\!\leq\!40$), indicating concrete potential for quantum advantage at low depth.

\textit{Strong Heuristic Algorithm:} We introduce a new degree-of-saturation (DSatur-inspired) heuristic that runs in time $\mathcal{O}(|E| \log{|V|})$ and empirically outperforms both SDP and shallow-depth QAOA, leading to a strong classical baseline. Extrapolation of QAOA performance with depth suggests that QAOA with depth $p\leq 20$ may surpass this algorithm on high-girth regular graphs. 

Our algorithm also works remarkably well on non-regular graphs with positive and negative edge weights. On the \texttt{GSet} benchmark our algorithm achieves results within ten percent of the state of the art multi-operator heuristic algorithm \cite{ma2015multiplesearchoperatorheuristic}, but with a nearly 17,000 fold speedup on average. 


\begin{figure*}[!ht]
    \centering
    \includegraphics[width=\linewidth]{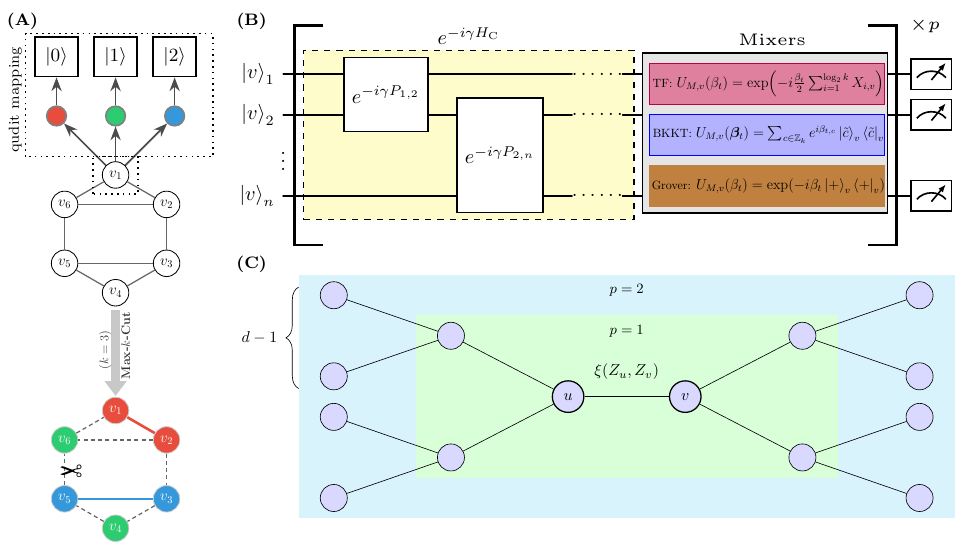}
    \caption{\textbf{Qudit Quantum Approximate Optimization Algorithm for Max-$k$-Cut}
    (A) Example input graph for the Max-$k$-Cut problem and its solution for $k=3$. 
    Cuttable edges (between vertices belonging to different sets) are shown as dashed lines, while uncuttable edges (between vertices belonging to the same set) are depicted as solid lines.
    The dashed box represents the mapping the Max-$k$-Cut problem to an integer labeling by assigning each label to a quantum state $\ket{a}$, where $0 \leq a \leq k-1$.
    (B) Schematic of the qudit QAOA circuit for Max-$k$-Cut, consisting of $p$ layers of the phaser and mixer. 
    The phaser is constructed from the cost Hamiltonian $H_C$ and mixer allows the exploration of the solution space.
     In this work we consider  three mixers: Transverse Field (TF),  BKKT (from Ref.~\cite{bravyi2022hybrid}), and Grover Mixer.
    For the BKKT mixer, we denote, for $c \in \mathbb{Z}_k$, $\ket{\tilde{c}}=\frac{1}{\sqrt{k}}\sum_{a\in\mathbb{Z}_k} e^{2\pi i a c/k}\ket{a}$. 
(C) Tree structure for an edge $\{u,v\}$ for QAOA at $p=2$ on a $d$-regular graph. Each vertex branches into $d$ neighbors (excluding the parent edge gives $d-1$ branches at each level). }
    \label{fig:fig_set_up_Max_k_cutting}
\end{figure*}

\section{Results}

\subsection{Quantum Approximate Optimization Algorithm (QAOA) on Qudits}

To map an integer problem to qudits,
each label is represented by a distinct quantum basis state; assigning a label to a variable corresponds to assigning it a quantum state $|a\rangle$, where $0 \leq a \leq k-1$.
\cref{fig:fig_set_up_Max_k_cutting}\textbf{(A)} illustrates this mapping, showing how the classical coloring of the graph is encoded as quantum state.
A promising algorithm for combinatorial optimization problems is QAOA, which evolves the quantum state by alternating between two types of operators (shown in \cref{fig:fig_set_up_Max_k_cutting}\textbf{(B)}): the phaser, which is a diagonal operator encoding the cost function, and the mixer $U_M$, which is a non-diagonal operator enabling nontrivial dynamics. While numerical results exist showing QAOA performs well on small-scale problems~\cite{Weggemans2022,bravyi2022hybrid}, prior results do not explore QAOA performance on large instances that are challenging for classical solvers due to the lack of analytical techniques.

The $p$-layer QAOA state for $n$ qudits is:
\begin{align}
\label{eq:treelike_qaoa_state}
    \ket{\bm{\gamma}, \bm{\beta}}
    & := \left(
    \overleftarrow{\prod_{t = 1}^p}U_M\left(\bm{\beta}_t\right)U_C\left(\gamma_t\right)
    \right)
    \ket{+}^{\otimes n},
\end{align}
where $\ket{+} = \frac1{\sqrt{k}} \sum_{a\in\mathbb{Z}_k} \ket{a}$ is a uniform superposition over all states of a qudit, $\bm{\gamma} = \left(\gamma_1, \ldots, \gamma_p\right) \in \mathbb{R}^p$ are phaser angles, and $\bm{\beta} = \left(\bm{\beta}_1, \ldots, \bm{\beta}_p\right)$ are the mixer angles, with each $\bm\beta_t \in \mathbb{R}^\ell$ for some $\ell\ge 1$.

The choice of mixer $U_{M}(\bm\beta_t) = \bigotimes_{u=1}^nU_{M,u}(\bm\beta_t)$ can have a major impact on QAOA performance.
In this work, we consider three mixers, summarized in \cref{fig:fig_set_up_Max_k_cutting}\textbf{\small{(B)}}.
All mixers we consider act identically on every qudit, so for brevity we write them as single-qudit operators $U_{M,u}(\bm\beta_t)$.
The first mixer, called the transverse field mixer, assumes that $k$ is a power of two and that the state of a qudit is encoded into $\log_2 k$ qubits via the binary encoding.
The transverse field mixer is given by
\begin{align}
    U_{M,u}^{\mathrm{TF}} = \mathrm{exp}\left(-i\frac{\beta}{2} \sum_{i=1}^{\log_2 k} X_{i}\right),
    \label{eq:transverse_field_mixer}
\end{align}
where $X_i$ is the Pauli-$X$ operator for qubit $i$.
This mixer is a natural extension of the standard mixer for qubit-based QAOA, but has an $\mathrm{SU}(2)$ symmetry that has no relation to the Max-$k$-Cut problem.
Next, we consider a mixer introduced in Ref.~\cite{bravyi2022hybrid}, defined as
\begin{align}
U_{M,u}^{\mathrm{BKKT}}(\boldsymbol{\beta}_t) = \sum_{c \in \mathbb{Z}_k} e^{i\beta_{t,c}} \ket{\tilde c}_u\bra{\tilde c}_u,
\label{eq:BKKT_mixer}
\end{align}
where $\ket{\tilde c} = \frac{1}{\sqrt{k}} \sum_{a \in \mathbb{Z}_k} \exp\left(2\pi i a c / k\right) \lvert a\rangle$.
We refer to this mixer as ``BKKT'' mixer in reference to the initials of the authors of Ref.~\cite{bravyi2022hybrid}.
Notably, this mixer nominally requires $k$ parameters per QAOA layer, though one parameter can be eliminated by requiring only equality up to global phase.
Finally, we consider the Grover mixer defined as
\begin{align}
    U_{M,u}^{\mathrm{Grover}} &
    = \mathrm{exp}\left(-i\beta\ket{+}_u\bra{+}_u\right) \\
    & = \mathrm{exp}\left(-i\frac{\beta}{k}\sum_{i,j=1}^k \ket{i}_u\bra{j}_u\right),
    \label{eq:grover}
\end{align}
where $\ket{+} = k^{-1/2} \sum_{j=1}^k \ket{j}$.
The BKKT mixer reduces to the Grover mixer when $\beta_{t,c} = 0$ for all $c\ne 0$. 

Our ``qudit-wise'' Grover mixer ($U_M\left(\bm{\beta}\right)=\bigotimes_{u=1}^ne^{-i\beta\ket{+}_u\bra{+}_u}$) reduces to the standard transverse field mixer when $k=2$ ($U_M^{\text{qubit}}\left(\bm{\beta}\right)=\bigotimes_{u=1}^ne^{-i\beta X_u}$). It is different from the ``global'' Grover used in Refs.~\cite{Bartschi2020,Golden2023} ($U_M^{\text{global}}\left(\bm{\beta}\right)=e^{-i\beta\ket{+}\bra{+}}=e^{-i\beta\bigotimes_{u=1}^n\ket{+}_u\bra{+}_u}$).

\subsection{Qudit QAOA Expectation Values at High Girth}
\label{sec:max_k_cutting_qaoa_high_girth_analysis}
 
In this work, we generalize the computational procedure developed in Ref.~\cite{qaoa_maxcut_high_depth} for Max-Cut on high-girth regular graphs to a broad class of graph problems with integer-valued variables. The key motivation behind this procedure is the locality of the action of QAOA on graphs. This locality leads to a formula for the expectation value of an edge observable which does not depend on the size of the graph provided that the girth of the graph exceeds $(2p+1)$. The girth of a graph is the length of the smallest cycle expressed in number of vertices.

The intuition behind the procedure is as follows. Since the mixer for QAOA acts vertex by vertex and the phaser acts on two vertices joined by an edge, the correlations produced by QAOA are restricted to a tree subgraph of the full graph with depth of the tree growing by one for each phaser layer. We now consider what happens to the state of vertices living on the two sides of an edge after $p$ QAOA layers. Because the state of either vertex is entangled with vertices within a tree of depth $p$, the combined state on the vertices and hence the edge expectation value can depend on vertices on a tree of depth at most $(2p+1)$ ($p$ on each side, and one for the edge itself). Thus, if the girth of the graphs is at least $(2p+2)$, then we can compute edge expectation values using tree tensor network contractions. By exploiting the symmetry of the cost function and using the structure of the QAOA ansatz, it is possible to further reduce the runtime and memory cost of classically evaluating expectation values~\cite{qaoa_maxcut_high_depth, lower_bounding_max_cut_qaoa}. This idea is illustrated in   \cref{fig:fig_set_up_Max_k_cutting}\textbf{(C)} for a 3-regular graph at $p=2$.

In the case of $k=2$, the expectation can be computed with memory complexity $\mathcal{O}\left(4^p\right)$ and time complexity $\mathcal{O}\left(p16^p\right)$~\cite{qaoa_maxcut_high_depth}. For regular graphs, the edge expectation $\bra{\bm{\gamma}, \bm{\beta}}Z_uZ_v\ket{\bm{\gamma}, \bm{\beta}}$ for edge $\{u, v\}$ can be shown to be independent of the edge by the high-girth assumption, and the independence of QAOA angles from vertices and edges.

Our analysis assumes a cost function of the form
\begin{align}
\label{eq:general_cost_function}
    C\left(\bm{x}\right) & = \sum_{\{u, v\} \in E}\varphi\left(x_u, x_v\right),
\end{align}
with $\varphi$ an arbitrary function $\mathbb{Z}_k \times \mathbb{Z}_k \longrightarrow \mathbb{R}$, assumed independent of the edge. Letting $\varphi\left(x_u, x_v\right) := \bm{1}\left[x_u \neq x_v\right]$ recovers the Max-$k$-Cut function (\cref{eq:cost_hamiltonian}).
The initial state is assumed of product form $\ket{\psi}^{\otimes |V|}$, and the mixer unitary is assumed to factor over qudits:
\begin{align}
    U_M\left(\bm{\beta}_t\right) := \bigotimes_{u \in V}U_{M, u}\left(\bm{\beta}_t\right),
\end{align}
where $U_{M, u}\left(\bm{\beta}_t\right)$ acts on qudit $u$.
Given this state ansatz, the following Proposition evaluates the expectation of any edge cost function:

\begin{proposition}[Evaluating edge expectations in qudit-QAOA]
\label{propEdgeExpectationsQAOANaive}
Let $G = \left(V, E\right)$ denote a $(d + 1)$-regular graph of girth at least $2p + 2$. Consider the $p$-layer QAOA state over qudits indexed by $V$ defined in \cref{eq:treelike_qaoa_state} and an arbitrary function $\xi: \mathbb{Z}_k \times \mathbb{Z}_k \longrightarrow \mathbb{R}$, possibly distinct from $\varphi$.
Then, the expectation of $\xi$, evaluated at edge $\{u, v\}$ under the QAOA state:
\begin{align}
    \nu := \bra{\bm{\gamma}, \bm{\beta}}\xi\left(Z_u, Z_v\right)\ket{\bm{\gamma}, \bm{\beta}},
\end{align}
can be classically evaluated by a procedure of time complexity $\mathcal{O}\left(pk^{4p + 4}\right)$ and memory footprint $\mathcal{O}\left(k^{2p + 2}\right)$. Furthermore, the value of the expectation is independent of edge $\{u, v\}$ if the cost $\varphi$ is independent of the edge. 
\end{proposition}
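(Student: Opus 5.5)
We will follow the strategy of Ref.~\cite{qaoa_maxcut_high_depth} and adapt it from qubits to $k$-level qudits.

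\textbf{Step 1: path-integral form.} Write $\nu$ as a computational-basis path integral. Insert resolutions of the identity between every layer of $U_C(\gamma_t)$ and $U_{M}(\bm\beta_t)$, on both the ket and the bra side. Each qudit $w$ then carries a configuration $\bm a_w=(a_w^{[1]},\dots,a_w^{[p]},a_w^{[0]},a_w^{[-p]},\dots,a_w^{[-1]})\in\mathbb{Z}_k^{2p+1}$, where $a^{[0]}$ is the measured label. The integrand factorizes into single-qudit weights and edge weights:
\begin{align}
f(\bm a_w) &= \braket{a^{[1]}|\psi}\,\prod_{t}\bra{a^{[t+1]}}U_{M,w}(\bm\beta_t)\ket{a^{[t]}}\times(\text{conj.\ for bra side}),\\
\Phi(\bm a_w,\bm a_{w'}) &= \exp\Bigl(-i\sum_{t}\gamma_t\bigl[\varphi(a_w^{[t]},a_{w'}^{[t]})-\varphi(a_w^{[-t]},a_{w'}^{[-t]})\bigr]\Bigr).
\end{align}
The observable enters only through $\xi(a_u^{[0]},a_v^{[0]})$. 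Strictly, there are $2p+2$ indices per qudit once the initial-state index is included, which is where $k^{2p+2}$ comes from.

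\textbf{Step 2: reduction to a tree.} By the usual light-cone argument, all unitaries outside the depth-$p$ neighbourhood of $\{u,v\}$ cancel against their adjoints. Girth at least $2p+2$ guarantees that this neighbourhood is a tree: two $d$-ary trees of depth $p$ hanging off $u$ and $v$. We take $(d+1)$-regularity, so each non-root vertex has $d$ children. This is the only place the girth assumption is used.

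\textbf{Step 3: leaf-to-root recursion.} Contract the tree from the leaves upward. Define
\begin{align}
H^{(0)}(\bm a)&=1,\\
H^{(m)}(\bm a)&=\Bigl(\sum_{\bm b\in\mathbb{Z}_k^{2p+2}} f(\bm b)\,\Phi(\bm a,\bm b)\,H^{(m-1)}(\bm b)\Bigr)^{d}.
\end{align}
The exponent $d$ comes from the $d$ identical child subtrees. This symmetry is what removes any dependence of the cost on the degree beyond a scalar power.

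Care is needed at the leaves and at the light-cone boundary. A vertex at distance $m$ from the edge only sees $p-m$ layers of phasers nontrivially. The remaining layers of its mixer telescope, using $\sum_b f(\bm b)$-type unitarity identities, so one can simply keep the full $2p+2$ index vector with the trivial weights absorbing the cancellation.

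The final answer is
\[
\nu=\sum_{\bm a,\bm b}\xi(a^{[0]},b^{[0]})\,f(\bm a)f(\bm b)\,\Phi(\bm a,\bm b)\,H^{(p)}(\bm a)H^{(p)}(\bm b).
\]

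\textbf{Step 4: complexity and edge independence.} Each $H^{(m)}$ is a table of $k^{2p+2}$ entries, which gives the memory bound. Each entry is a sum of $k^{2p+2}$ terms. There are $p$ iterations plus the final $k^{4p+4}$ sum, so the total time is $\mathcal{O}(pk^{4p+4})$.

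Edge independence follows because the recursion uses only $f$, $\Phi$, $d$ and $p$. These are identical for every edge when $\varphi$, the mixer and the angles are edge-independent.

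\textbf{Main obstacle.} The main difficulty is Step 3's boundary bookkeeping. We must verify that the unitarity cancellations, which for vertices near the boundary of the light cone reduce $\sum_{\bm b}f(\bm b)\,\Phi\cdots$ to the correct normalized quantity, are consistent with a single uniform recursion starting from $H^{(0)}=1$. The first thing to try is checking that $\sum_{\bm b}f(\bm b)=1$ and that $\Phi$ is trivial whenever ket and bra indices agree at layers beyond the light cone. This mirrors the qubit argument in Ref.~\cite{qaoa_maxcut_high_depth}.
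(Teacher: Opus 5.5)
Your proposal is correct and follows essentially the same route as the paper: a computational-basis path integral, reduction to the depth-$p$ tree around $\{u,v\}$, and a leaf-to-root contraction starting from $H^{(0)}=1$, with the $d$-th power coming from identical child subtrees. The complexity count is likewise $p$ dense products with vectors of size $k^{2p+2}$ plus the final double sum. The boundary bookkeeping you flag as the main obstacle is not actually needed. As in the paper, once you pass to the QAOA state on the tree itself, with all $p$ layers acting on every tree vertex and edge, the path integral is an exact identity. The leaves then simply have no children, so $H=1$ there without any separate cancellation argument.
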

Here
\begin{align}
    Z & := \mathrm{diag}\left(0, 1, \ldots, k - 2, k - 1\right)
\end{align}
is the single-qudit operator for angular momentum operator along the $z$ axis for a spin qudit. Note that the state $\ket{\bm{\gamma}, \bm{\beta}}$ depends on the cost function $\varphi$ and hence $C(\bm{x})$ as expressed in \cref{eq:general_cost_function}, and therefore so does $\nu$. This proposition holds for general $\xi$, but typically one is interested in computing the expectation value of the cost function which is the case of $\xi = \phi$. A detailed derivation of this result based on evaluating the expectation value on a depth $2p+1$ subgraph is presented in Section \ref{supp:derivation_expect} of the supplementary material. 

An explicit statement of the evaluation procedure for \cref{propEdgeExpectationsQAOANaive} is provided in the Methods. Proposition~\ref{propEdgeExpectationsQAOANaive} holds for any choice of edge penalty $\varphi\left(x_u, x_v\right)$ in the ansatz cost function, and any choice of edge cost $\xi\left(x_u, x_v\right)$ in the loss function. The Max-Cut QAOA case, treated in Ref.~\cite{qaoa_maxcut_high_depth}, is obtained by letting
\begin{align}
    k := 2, && \varphi\left(x_u, x_v\right) := \xi\left(x_u, x_v\right) = (-1)^{x_u + x_v},
\end{align}
in which case Proposition~\ref{propEdgeExpectationsQAOANaive} recovers space and time complexities $\mathcal{O}\left(4^p\right)$, $\mathcal{O}\left(p16^p\right)$ from this earlier work. The proof of Proposition~\ref{propEdgeExpectationsQAOANaive} generalizes straightforwardly from the special Max-Cut case discussed in Ref.~\cite{qaoa_maxcut_high_depth}.

An improvement to the procedure of Ref.~\cite{qaoa_maxcut_high_depth}, reducing time complexity from $\mathcal{O}\left(p16^p\right)$ to $\mathcal{O}\left(4^p\right)$, was recently proposed in Ref.~\cite{lower_bounding_max_cut_qaoa}. This simplification more finely exploits the structure of the tree tensor network encoding local expectation $\bra{\bm{\gamma}, \bm{\beta}}Z_uZ_v\ket{\bm{\gamma}, \bm{\beta}}$, avoiding the quadratic time overhead of generic matrix-vector multiplication. In the current work, we generalize this improvement to qudits and a family of cost functions which we call \textit{translation-invariant in $\mathbb{Z}_k$}. Translation invariance in $\mathbb{Z}_k$ means that functions $\varphi\left(x_u, x_v\right), \xi\left(x_u, x_v\right)$, characterizing contribution of edge $\{u, v\}$ to the ansatz cost function and loss function respectively, must only depend on $x_u - x_v$. Our approach is based on explicit algebra and relies on efficient evaluation of the Hadamard transform. Given a dit dimension $k$ ($k = 2$ for bits), the Hadamard transform over $n$ dits can be computed in time and memory $\mathcal{O}\left(k^n \log (k^n)\right)$.
For $p$-layer QAOA, efficient computation of the Hadamard transform implies a reduction of the overall time complexity from $\mathcal{O}\left(pk^{4p+4}\right)$ to $\mathcal{O}\left(p^2 k^{2p+2} \log k\right)$, as stated in Proposition~\ref{propEdgeExpectationsQAOAHadamard}. This (roughly) quadratic improvement is independent of the graph degree.

\begin{proposition}[Edge expectations in qudit-QAOA for edge costs translation-invariant in $\mathbb{Z}_k$]
\label{propEdgeExpectationsQAOAHadamard}
Recall the setting and notations from Proposition~\ref{propEdgeExpectationsQAOANaive}, and further assume edge penalties translation-invariant in $\mathbb{Z}_k$, both for the ansatz cost function and for the loss function:
\begin{align}
    \varphi\left(x_u, x_v\right) & \longrightarrow \varphi\left(x_u - x_v\right),\\
    \xi\left(x_u, x_v\right) & \longrightarrow \xi\left(x_u - x_v\right).
\end{align}
Then, for all edge $\{u, v\}$, local expectation
\begin{align}
    \bra{\bm{\gamma}, \bm{\beta}}\xi\left(Z_u - Z_v\right)\ket{\bm{\gamma, \bm{\beta}}}
\end{align}
can be classically evaluated by a procedure of time complexity $\mathcal{O}\left(p^2k^{2p + 2}\log(k)\right)$ and (unchanged) memory complexity $\mathcal{O}\left(k^{2p + 2}\right)$. 
\end{proposition}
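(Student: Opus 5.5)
We start from the iterative tree-contraction procedure behind Proposition~\ref{propEdgeExpectationsQAOANaive}. We write the QAOA expectation as a path integral over the $2p+2$ computational-basis labels of each vertex, namely the forward and backward branches of the Keldysh-like contour together with the measured slice. We denote such a path by $\bm{a}=(a_1,\dots,a_{2p+2})\in\mathbb{Z}_k^{2p+2}$. The phase contributed by an edge between paths $\bm a$ and $\bm b$ is
\begin{align}
\Phi(\bm a,\bm b)=\prod_{j}\exp\left(-i\Gamma_j\,\varphi(a_j,b_j)\right),
\end{align}
where $\Gamma_j\in\{\pm\gamma_t,0\}$ is the appropriate signed phaser angle on slice $j$. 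Each vertex also carries a single-site weight $f(\bm a)$ built from the mixer matrix elements and the initial state. The leaves-to-root recursion for the Naive proposition reads
\begin{align}
H^{(m)}(\bm a)=\left[\sum_{\bm b}\Phi(\bm a,\bm b)\,f(\bm b)\,H^{(m-1)}(\bm b)\right]^{d}.
\end{align}
The final value $\nu$ is a bilinear contraction of $H^{(p)}$ on both endpoints of the edge with $\Phi$ and $\xi$. The cost of each step is dominated by the $k^{2p+2}\times k^{2p+2}$ matrix--vector product with $\Phi$, which gives $\mathcal{O}(k^{4p+4})$ per step and $\mathcal{O}(pk^{4p+4})$ overall.

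The key observation is that under translation invariance, $\Phi(\bm a,\bm b)=\Psi(\bm a-\bm b)$ depends only on the componentwise difference in $\mathbb{Z}_k^{2p+2}$. The matrix--vector product is therefore a convolution over the abelian group $\mathbb{Z}_k^{2p+2}$, namely $(\Psi * g)(\bm a)$ with $g=fH^{(m-1)}$. The plan is to diagonalize this convolution with the $\mathbb{Z}_k$ Hadamard (multidimensional discrete Fourier) transform $\mathcal{F}$. We would write $\Psi*g=\mathcal{F}^{-1}\left[(\mathcal{F}\Psi)\cdot(\mathcal{F}g)\right]$ and precompute $\hat\Psi=\mathcal{F}\Psi$ once. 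Because $\Psi$ factors over slices, $\Psi(\bm c)=\prod_j\psi_j(c_j)$, its transform is a product of one-dimensional DFTs and costs only $\mathcal{O}(pk^{2p+2})$ to tabulate. Each recursion step then consists of one forward transform, a pointwise product, an inverse transform, and a pointwise $d$-th power. The fast transform over $n=2p+2$ dits runs in $\mathcal{O}(k^n\log k^n)=\mathcal{O}(pk^{2p+2}\log k)$ time by applying a length-$k$ DFT along each of the $n$ axes. The $d$-th power costs $\mathcal{O}(k^{2p+2}\log d)$ or less, independent of the structure. Summing over the $p$ recursion levels gives $\mathcal{O}(p^2k^{2p+2}\log k)$. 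Memory stays at a constant number of arrays of size $k^{2p+2}$, matching Proposition~\ref{propEdgeExpectationsQAOANaive}.

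For the final edge contraction, $\nu=\sum_{\bm a,\bm b}G(\bm a)G(\bm b)\Psi(\bm a-\bm b)\,\xi(a_{p+1}-b_{p+1})$, with $G=fH^{(p)}$ and the index $p+1$ marking the measured slice. Since $\xi$ depends only on the difference in one slice, the kernel $\Psi(\bm c)\xi(c_{p+1})$ is again translation invariant. We would evaluate $\nu$ as $\langle G,\tilde\Psi*G\rangle$ using one more convolution, which costs no more than a single recursion step. Edge independence is inherited from Proposition~\ref{propEdgeExpectationsQAOANaive}.

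The main obstacle is the bookkeeping needed to verify that the edge factor is exactly a function of $\bm a-\bm b$ once the forward and backward branches are included. The backward phases are complex conjugates with opposite sign, and some slices, such as the measured slice, may carry no phaser. We would check directly that each slice contributes $e^{\mp i\gamma_t\varphi(a_j-b_j)}$, or $1$ on slices without a phaser, so that $\Psi$ factorizes. A second concern is whether the single-site factor $f$ is confined to pointwise multiplications. If the mixer matrix elements couple slices within a single vertex only, they never obstruct the group convolution across the edge, so we would arrange the recursion so that $f$ is applied pointwise before each transform.
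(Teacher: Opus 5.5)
Your proposal is correct and follows essentially the same route as the paper: the paper notes that under translation invariance $e^{i\Phi(\bm a,\bm b)}=m(\bm a-\bm b)$, precomputes $\hat m$ once, and applies its Hadamard-transform convolution lemma to carry out each of the $p$ iteration steps in $\mathcal{O}(pk^{2p+2}\log k)$; it then handles the final contraction with the translation-invariant kernel $e^{i\Phi(\bm a,\bm b)}\xi(\bm a-\bm b)$ by one more such product plus an $\mathcal{O}(k^{2p+2})$ sum. Your extra remarks (factorizing $\hat\Psi$ slice by slice, and noting that $\xi$ acts only on slice $p+1$) are harmless refinements; however, to keep the stated degree-independent bound you should count the pointwise $d$-th power as $\mathcal{O}(1)$ per entry rather than $\mathcal{O}(\log d)$.
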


In practical terms, Proposition~\ref{propEdgeExpectationsQAOAHadamard} enables computation of local QAOA expectation values on large regular graphs, and provides a foundation for analyzing the performance of QAOA for integer combinatorial problems.

\subsection{QAOA for Max-$k$-Cut}

The Max-$k$-Cut problem is a fundamental problem in graph theory with applications in statistical physics, network design, and clustering~\cite{Barahona1988, Boykov}.
Given an unweighted graph $G = (V, E)$, the goal is partition the vertices $V$ into $k$ subsets such that the number of edges connecting vertices in different subsets is maximized.
Denoting an assignment of vertices to subsets by $\chi: V \to \mathbb{Z}_k = \set{0,1,\ldots,k-1}$, we say that an edge $\set{u,v}\in E$ is \textit{cut} by $\chi$ if $\chi(u)\ne\chi(v)$.
The Max-$k$-Cut objective is then to find an assignment $\chi$ that maximizes the number of cut edges,
\begin{align}
\label{eq:maxk_objective}
C_G(\chi)
:= \sum_{\{u,v\}\in E} \mathbf{1}[\chi(u) \neq \chi(v)],
\end{align}
where $\mathbf{1}[P]$ is the indicator function, equal to $1$ if the predicate $P$ is true and $0$ otherwise.
The \textit{cut fraction} of $\chi$ is the fraction of edges in $E$ that are cut, and the \textit{approximation ratio} $\alpha_G(\chi)$ is the ratio between the number of edges cut by $\chi$ and the number of edges cut by the optimal solution, $\mathsf{OPT}_k(G) = \max_\chi C_G(\chi)$:
\begin{align}
\mathrm{cut~fraction}(G,\chi) := \frac{C_G(\chi)}{|E|},
&&
\alpha_G(\chi) := \frac{C_G(\chi)}{\mathsf{OPT}_k(G)}.
\end{align}
The cut fraction provides a lower bound to the approximation ratio, since the optimal cut value $\mathsf{OPT}_k(G)$ cannot exceed $|E|$. In this work, we focus on instances with $k \leq 8$, where both quantum and classical algorithms have greater potential to outperform the random baseline.
For a uniformly random assignment $\chi$, the expected cut fraction is
\begin{align}
\mathbb{E}_\chi \left[\mathrm{cut~fraction}(G,\chi)\right]
= 1 - \frac{1}{k}~,
\end{align}
which yields an expected approximation ratio of at least $1 - 1/k$. For example, a uniformly random assignment yields an expected approximation ratio of at least $0.75$ for $k=4$.
As $k$ increases, the expected performance of a random assignment approaches the optimal value, since $1 - 1/k \to 1$ as $k \to \infty$.
As a result, even though finding the exact solution remains computationally hard for large $k$, the potential for large improvement over random assignment is greatest for smaller values of $k \lesssim 10$.
In this work, we develop and analyze classical and quantum algorithms for Max-$k$-Cut on random $d$-regular graphs (simple, undirected graphs on $n$ vertices with all degrees equal to $d$), focusing on instances with $k \leq 8$.
For classical approaches, this includes semi-definite programming (SDP) with randomized rounding (the Frieze-Jerrum algorithm) and our new heuristic method, which we discuss further in \cref{sec:classical_algorithms}.

From the cost function in \eqref{eq:maxk_objective}, we derive that the Max-$k$-Cut phaser $U_C(\bm\gamma)$ is generated by a Hamiltonian that penalizes edges between vertices that belong to the same set, 
\begin{align}
\label{eq:cost_hamiltonian}
H_C = \sum_{\set{u,v} \in E} P_{u,v},
&&
P_{u,v} = \sum_{a\in\mathbb{Z}_k} \op{aa}_{u,v},
\end{align}
which leads to 
\begin{align}
\label{eq:phase_separator}
U_C(\gamma)
= e^{-i \gamma H_C}
= \prod_{\set{u,v}\in E} \exp\left( -i \gamma P_{u,v} \right).
\end{align}

Most current quantum hardware is qubit-based, and resource-efficient implementations of qudit QAOA on qubit architectures have been analyzed previously in Refs.~\cite{tsvelikhovskiy2026qaoamixer,fuchs_2021_maxkcut,fuchs_2025_maxkcut}.
Building on these results, Supplementary Material (\cref{sec:gate_level_implementation}) details the gate-level implementation of the components required for qudit QAOA for Max-$k$-Cut in qubit-based hardware. 
Ref.~\cite{fuchs_2021_maxkcut} presents an efficient construction of the Max-$k$-Cut phaser operator that adds only $\mathcal{O}(\log_2 k)$ gates relative to Max-2-Cut when $k$ is a power of 2 .
Similarly, following the circuit-level implementation in Ref.~\cite{tsvelikhovskiy2026qaoamixer}, we provide circuits for the Grover mixer that incur an additional $\mathcal{O}(\log_2 k)$ gates relative to Max-2-Cut when $k$ is a power of 2.

Proposition~\ref{propEdgeExpectationsQAOAHadamard} can be used to evaluate the expected Max-$k$-Cut objective in the QAOA state.  With the following choice
\begin{align}
    \varphi\left(x_u, x_v\right) = \xi\left(x_u ,x_v \right) := \mathbf{1}\left[x_u - x_v \neq 0\right]~,
\end{align}
we obtain the following corollary:

\begin{corollary}[Evaluating Max-$k$-Cut cost function]
\label{coroExpectMax-k-Cut}
Let $G = \left(V, E\right)$ denote a $(d + 1)$-regular graph of girth at least $2p + 2$. Consider the $p$-layer QAOA state over qubits indexed by $V$ defined in \cref{eq:treelike_qaoa_state}, prepared with Max-$k$-Cut cost Hamiltonian \cref{eq:cost_hamiltonian}, i.e.
\begin{align}
    C_G\left(\chi\right) & := \sum_{\{u, v\} \in E}\bm{1}\left[\chi\left(u\right) \neq \chi\left(v\right)\right].
\end{align}
Then, the expectation of $C_G$ under the QAOA state evaluates to:
\begin{align}
    \nu & := \bra{\bm{\gamma}, \bm{\beta}}C_G\ket{\bm{\gamma}, \bm{\beta}} \nonumber\\
    & = \left|E\right|\bra{\bm{\gamma}, \bm{\beta}}\mathbf{1}\left[Z_u \neq Z_v\right]\ket{\bm{\gamma}, \bm{\beta}},
\end{align}
where $\{u, v\}$ is any edge of the graph, the value of the expectation being independent of the edge choice. Edge expectation $\bra{\bm{\gamma}, \bm{\beta}}\bm{1}\left[Z_u \neq Z_v\right]\ket{\bm{\gamma}, \bm{\beta}}$ can then be evaluated by a procedure of time complexity $\mathcal{O}\left(p^2k^{2p + 2}\log(k)\right)$ and memory complexity $\mathcal{O}\left(k^{2p + 2}\right)$.
\end{corollary}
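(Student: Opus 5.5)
The corollary follows from Proposition~\ref{propEdgeExpectationsQAOAHadamard} once we check that Max-$k$-Cut fits its hypotheses, together with linearity of expectation and the edge-independence part of Proposition~\ref{propEdgeExpectationsQAOANaive}. First we match the QAOA ansatz to the general framework. The phaser generated by $H_C$ in \cref{eq:cost_hamiltonian} is the general cost \cref{eq:general_cost_function} with edge penalty $\varphi(x_u,x_v)=\mathbf{1}[x_u=x_v]$. This differs from $\mathbf{1}[x_u\neq x_v]$ only by the constant $1$ and a sign: $\sum_{\{u,v\}\in E}\mathbf{1}[x_u = x_v] = |E| - C_G$. Hence $e^{-i\gamma H_C}$ and $e^{i\gamma C_G}$ agree up to a global phase, with $\gamma\mapsto-\gamma$. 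Global phases cancel in expectation values, and the sign flip in $\gamma$ can be absorbed into the parameters. So we may take either $\varphi(x)=\mathbf{1}[x=0]$ or $\varphi(x)=\mathbf{1}[x\neq 0]$ with $x=x_u-x_v\in\mathbb{Z}_k$. Both are translation-invariant in $\mathbb{Z}_k$, since they depend only on $x_u-x_v$. We also check the remaining structural hypotheses: the initial state $\ket{+}^{\otimes n}$ is a product state, and each of the three mixers (TF, BKKT, Grover) factors as $\bigotimes_u U_{M,u}(\bm\beta_t)$ with identical single-qudit factors.

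Second, we expand the objective. As a diagonal operator, $C_G=\sum_{\{u,v\}\in E}\mathbf{1}[Z_u\neq Z_v]$. By linearity, $\nu=\sum_{\{u,v\}\in E}\bra{\bm\gamma,\bm\beta}\mathbf{1}[Z_u\neq Z_v]\ket{\bm\gamma,\bm\beta}$. The loss function $\xi(x)=\mathbf{1}[x\neq 0]$ is also translation-invariant. Since $\varphi$ does not depend on the edge, the last clause of Proposition~\ref{propEdgeExpectationsQAOANaive} says that each summand takes a common value on a $(d+1)$-regular graph of girth at least $2p+2$. This gives $\nu=|E|\,\bra{\bm\gamma,\bm\beta}\mathbf{1}[Z_u\neq Z_v]\ket{\bm\gamma,\bm\beta}$ for any edge $\{u,v\}$.

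Third, we apply Proposition~\ref{propEdgeExpectationsQAOAHadamard} with these choices of $\varphi$ and $\xi$. This yields the $\mathcal{O}(p^2k^{2p+2}\log k)$ time and $\mathcal{O}(k^{2p+2})$ memory bounds for the single-edge expectation. Multiplying by $|E|$ adds only constant cost.

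The only step needing care is the first one: confirming that penalizing equal labels, rather than rewarding unequal ones, stays within the translation-invariant framework, and that the resulting constant offset and sign change affect only the global phase and the parametrization. The rest is bookkeeping. We note in passing that the statement's ``qubits'' should read ``qudits''.
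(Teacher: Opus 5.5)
Your proposal is correct and follows essentially the same route as the paper. The paper obtains the corollary by specializing Proposition~\ref{propEdgeExpectationsQAOAHadamard} to the translation-invariant choice $\varphi=\xi=\mathbf{1}[x_u-x_v\neq 0]$, with linearity of expectation and the edge-independence clause of Proposition~\ref{propEdgeExpectationsQAOANaive} supplying the factor $|E|$. Your extra check, that the phaser built from $H_C$ (which penalizes equal labels) agrees with the one built from $C_G$ up to a global phase and $\gamma\mapsto-\gamma$, covers a detail the paper glosses over, and you handle it correctly.
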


\subsection{Classical Algorithms for Max-k-Cut}

We study the average‑case performance of QAOA on random $d$‑regular graphs in the large‑girth regime. To make a fair assessment of quantum performance, we compare QAOA with the SDP algorithm corresponding to the strongest known worst-case performance guarantees. Additionally, we introduce a heuristic algorithm that exhibits strong empirical performance. 

The Frieze–Jerrum algorithm generalizes Goemans–Williamson to Max‑$k$‑Cut via a simplex embedding and an SDP relaxation with randomized rounding \cite{goemans1995,Frieze1997}. It provides the best known worst‑case guarantees for general $k$ and is practical up to a few thousand vertices using standard solvers \cite{Vandenberghe1996}. In our comparisons, we report the achieved cut fraction averaged over instances and rounding seeds.

\begin{table}[b]
\newcommand{\clock}[1]{\footnotesize{(#1\text{s})}}
\centering
\label{tab:gset_subset}
\setlength{\tabcolsep}{1pt}
{%
\begin{tabular}{@{}l |c |c |c |c |c @{}}

\toprule
\textbf{Graph} &  \textsc{DSatur} & \textsc{MOH} & \textsc{Rank-1} & \textsc{Greedy} \\
\midrule
\textbf{G2} & 14883 \clock{0.09} & 15172 \clock{539} & 13291 \clock{1.06} & 14790 \clock{16.3} \\
\textbf{G4} & 14901 \clock{0.08} & 15184 \clock{657} & 13316 \clock{1.05} & 14806 \clock{9.90} \\
\textbf{G6} & 2292 \clock{0.08} & 2632 \clock{270} & 992 \clock{0.64} & 2082 \clock{11.4} \\
\textbf{G9} & 2169 \clock{0.09} & 2478 \clock{692} & 991 \clock{0.64} & 2076 \clock{11.4} \\

\midrule 

\textbf{G49} & 6000 \clock{0.03} & 6000 \clock{0.70} & 6000 \clock{5.20} & 5996 \clock{394} \\
\textbf{G50} & 6000 \clock{0.03} & 6000 \clock{116} & 5934 \clock{6.00} & 5998 \clock{399} \\

\midrule 

\textbf{G72} &  7194 \clock{0.19} & 8192 \clock{6393} & 3849 \clock{56.6} & timeout \\
\textbf{G81} & 14506 \clock{0.34} & 16321 \clock{4821} & 5541 \clock{280}  & timeout \\

\bottomrule
\end{tabular}%
}
\caption{Comparison of our \textsc{DSatur} heuristic algorithm against \textsc{MOH}, \textsc{Rank-1} and \textsc{Greedy} on select GSet instances. Reported values are scores of \textsc{Max-3-Cut} with runtimes in seconds in parentheses. Values for \textsc{MOH} are taken from \cite{Ma2016}, while those of \textsc{Rank-1} and \textsc{Greedy} are taken from \cite{stevens2026exploitinglowrankstructuremaxkcut}. Timeout corresponds to execution time exceeding 30 minutes.}
\end{table}

To complement SDP with a fast and structure‑aware baseline, we introduce a heuristic method tailored to maximize crossing edges on sparse graphs.

The procedure works as follows:
(i) Assignment phase: iteratively select the unlabeled vertex of maximum \textit{saturation degree} (number of distinct labels present in its neighborhood)~\cite{Brelaz1979}, breaking ties by sum of incidence edge weights, and assign the label that maximizes immediate crossing edges using neighbor label counts.
(ii) Local improvement: perform iterative relabeling, changing a vertex label only if it strictly increases the global cut value.
The overall runtime of the algorithm is $\mathcal{O}(|E| \log |V|)$, for a graph with $|E|$ edges and $|V|$. The algorithm works for both unweighted and weighted graphs with positive and negative edge weights. 

Empirically, this heuristic tracks the colorability threshold \cite{Kemkes2010} of random regular graphs as we demonstrate in \Cref{fig:heuristic_vs_SDP,fig:heuristic_vs_SDP_supp}, and surpasses the SDP algorithm. For more details on the classical algorithms considered, see Methods (\cref{sec:classical_algorithms}). Additionally, we evaluate the performance of our algorithm in terms of the runtime and size of cuts on all instances of the \texttt{GSet} benchmark \cite{Davis2011}. This benchmark contains 71 instances ranging in size from 800 to 20000 vertices, and contains graphs with edge weights $\{\pm 1\}$ in addition to unweighted graphs. We compare our algorithm to Multi Search Operator Heuristic (MOH) \cite{Ma2016}, the greedy algorithm from \cite{Gui2019}, and the recently developed Rank-1 algorithm presented in \cite{stevens2026exploitinglowrankstructuremaxkcut}. The comparison for a subset of \texttt{GSet} graphs is show in \cref{tab:gset_subset}, while the full comparison is given in Supplementary Material, \cref{sec:code_plus_table}. The MOH algorithm produces the best cut; however, our algorithm produces cuts within $10 \%$ of MOH while being $\sim 17,000 \times$ faster on average. Furthermore, averaged across the instances our algorithm produces a cut $25 \%$ larger than the Rank-1 algorithm at an average speedup of $\sim 100 \times$, thus beating it in both quality and speed. Note that since the code for the MOH algorithm is not publicly available, we only compare the Frieze–Jerrum algorithm and QAOA against our DSatur heuristic algorithm for random regular graphs.

\subsection{Comparing QAOA with Classical Algorithms}

\begin{figure*}[htbp]
    \centering
    \includegraphics[width=\linewidth]{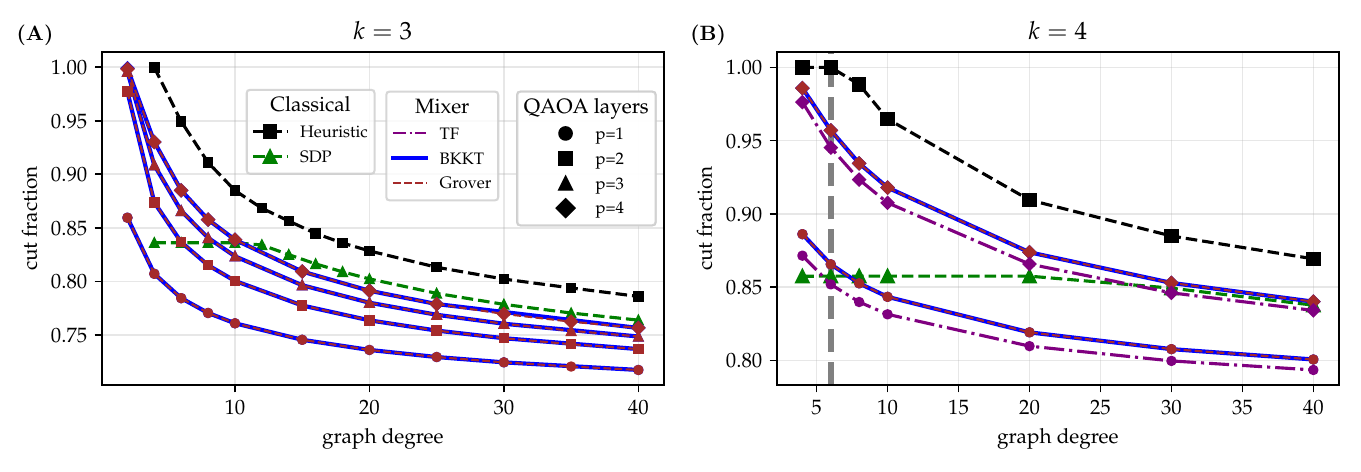}
    \caption{ 
  \textbf{Qudit QAOA for Max-$k$-Cut: Mixer comparison.}
The performance of QAOA for Max-$k$-Cut using the mixers considered in this work is studied for $k=3$ \textbf{(A)} and $k=4$ \textbf{(B)}.
For $k=3$, both the Grover and BKKT mixer are studied; notably, after optimization, the latter exhibits behavior indistinguishable from the Grover mixer, a phenomenon already observed for other discrete constraint satisfaction problems~\cite{peptide_sampling_qaoa}. 
For $k=4$, we additionally include the tensor product mixer,
we find that similar to $k=3$, Grover mixer and BKKT mixer yield overlapping performance and both surpass the tensor product mixer. 
When $k=3$, QAOA outperforms semidefinite programming (SDP) when the graph degree $d\le10$.
QAOA outperforms SDP across all graph degrees shown when $k=4$.
However, our classical heuristic algorithm consistently outperforms QAOA in this regime. 
The dashed vertical line gives the bound of the maximum colorable graph degree given in 
\cref{tab:degree_threshold_max_k_cut}.}
\label{fig:mixer_comparison}
\end{figure*}

\begin{figure*}[]
    \centering
\includegraphics[width=\linewidth]{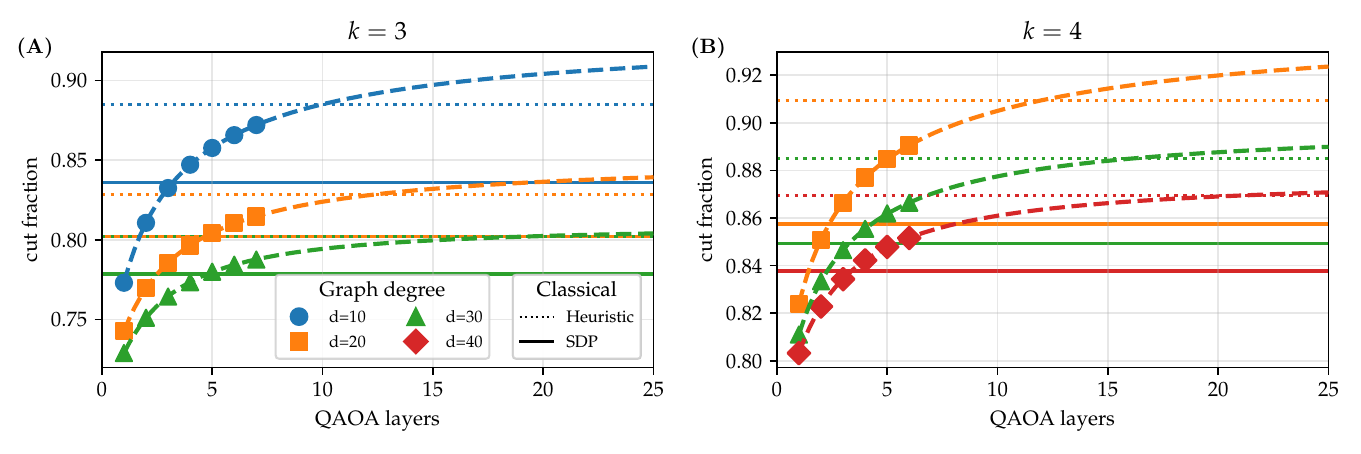}
    \caption{\textbf{QAOA for Max-$k$-Cut at larger circuit depths for the Grover mixer.}
QAOA performance on Max-$k$-Cut using the Grover mixer is examined as circuit depth $p$ increases, with optimal parameters studied for $k=3$ ($p=7$) in \textbf{(\small A)} and $k=4$ ($p=6$) in \textbf{(\small B)}. 
The results show that increasing the QAOA depth leads to improved performance across the range of graph degrees considered. 
To further explore the relationship between QAOA and the classical heuristic algorithm, we fit the QAOA performance to the function $F(p) = m / (p^a + c) + b$, enabling extrapolation to larger depths. 
The analysis suggests the existence of a finite threshold depth, $p_{\mathrm{th}}$, beyond which QAOA may outperform the classical heuristic algorithm for both $k=3$ and $k=4$. 
However, this extrapolation remains speculative, and further validation through direct simulation at greater circuit depths will be required to substantiate these trends.}
 \label{fig:qaoa_large_depth}
\end{figure*}

For fixed $(k,d,p)$ and a chosen mixer, we optimize QAOA parameters $(\boldsymbol{\gamma},\boldsymbol{\beta})$ to maximize the expected cut fraction computed from the large‑girth edge‑local evaluation as given in Theorem~\ref{propEdgeExpectationsQAOANaive}.
The optimization of parameters follows the FOURIER iterative procedure of Ref.~\cite{Zhou2020} implemented in QOKit~\cite{Lykov2023} using the fast Fourier transform~\cite{apte2025, 2020SciPy-NMeth}. The optimized parameters at depth $p-1$ are used as initial point for optimization at depth $p$ by transforming them to Fourier basis and extending in this basis; thereafter all $p$ parameters are optimized jointly.

We evaluate the performance of qudit QAOA for Max-$k$-Cut for the different mixers:  the transverse field mixer, the BKKT Mixer, and the Grover mixer.
For $k=3$, we found that after parameter optimization, the BKKT mixer effectively reduces to the Grover mixer, resulting in identical performance across all tested instances. For $k=4$, both the Grover mixer and the BKKT mixer outperform the transverse field mixer for all the graph degrees.
All QAOA results reported here correspond to circuit depths up to $p=4$ (see Fig.~\ref{fig:mixer_comparison}).

A notable observation is that QAOA, even at small circuit depths, can outperform semidefinite programming (SDP) on random $d$‑regular graphs in the high‑girth regime. The advantage is most pronounced for $k=4$, where QAOA yields higher average cut fractions than SDP across all degrees considered. For $k=3$, QAOA surpasses SDP for graphs with $d \leq 10$, after which SDP performs better. However, within the studied depth range, we are able to develop a classical heuristic that outperforms QAOA for both $k=3$ and $k=4$.

One way to improve QAOA is to increase its circuit depth $p$.
Accordingly, we optimized the  parameters up to $p=7$ for $k=3$ and up to $p=6$ for $k=4$. 
This enables us to test whether additional depth narrows the performance gap with, or even surpasses, the heuristic introduced in this work. We evaluate across multiple graph degrees to verify that the observed trends are not specific to a single instance class (see Fig.~\ref{fig:qaoa_large_depth}). Beyond these $p$ values, the exponential cost of evaluation of our analytical formula for the expectation value makes numerical study with current techniques intractable due to exponential growth with $p$.

Note that for evaluating the Frieze-Jerrum algorithm we generated random regular graphs of the fixed degree without any assumption of girth. On the other hand, the QAOA evaluation at depth $p$ requires us to assume that the graph has girth at least $2p+2$. Nevertheless, as we show in Appendix \ref{sec:other_values_of_k}, the mean cut fraction achieved by the Frieze-Jerrum algorithm shows little variation across graphs of different girth with the same degree, and even across graphs of different degrees $d$ for $d \leq 10$.

It is theoretically expected that increasing the QAOA depth $p$ improves performance if optimal angles can be found. 
In practice, our results indicate that, for $k=3$ and $k=4$, the optimizer consistently identifies high-quality angles, and the best-found cut fraction increases with $p$ over the ranges studied. 
To characterize the trend, we model performance as
$
F(p) = m/(p^a + c)+ b,
$
with fitting parameters $m,a,c,b$, following the analysis in~\cite{boulebnane2025evidencequantumapproximateoptimization,qaoa_ksat}. 
This functional form captures diminishing returns at larger depths and enables principled, albeit cautious, extrapolation beyond the simulated depths. 
The fitted curves align closely with the empirical data.

This extrapolation suggests that there exists a finite threshold depth, $p_{\mathrm{th}}$, at which QAOA begins to outperform the heuristic algorithm for both $k=3$ and $k=4$. 
This threshold marks a regime where quantum optimization, given sufficient circuit depth, can surpass the best classical heuristics available. 
However, we note that to confirm that the extrapolation to larger $p$ is valid we must run QAOA at larger depths, which we leave as an important direction for future work.

For larger values of $k$, ($k\in\{5,6,7,8\}$), we observe similar trends with those for $k=3$ and $4$ (see Supplementary Material, \cref{sec:other_values_of_k}). 
Overall, our theoretical analysis together with empirical results suggests that the qudit-based QAOA may outperform classical approaches for Max-$k$-Cut on the instance families tested. 
To our knowledge, this constitutes the strongest evidence to date pointing towards quantum advantage on an integer optimization problem in terms of solution quality.

\section{Discussion and Conclusion}
\label{sec:discussion_and_conclusion}

In this work, we study qudit-based QAOA for integer optimization problems on graphs, focusing on Max-$k$-Cut as a primary example.
We derive an iterative formula that evaluates edge expectation values for depth-$p$ QAOA on high-girth graphs in time $\mathcal{O}(p k^{4p+4})$, independent of graph size.
For cost functions with translation invariance in $\mathbb{Z}_k$ (including Max-$k$-Cut), we reduce this cost to $\mathcal{O}(p^2 k^{2p+2}\log k)$ by using Hadamard transforms. 

Using this framework, we optimize QAOA parameters for $k\in\{3,4\}$ up to depth $p=4$ and compute average cut fractions on random $d$-regular graphs.
We identify parameter regimes where QAOA outperforms the Frieze-Jerrum SDP algorithm: for $k=3$ with degree $d\leq 10$, and for $k=4$ with $d\leq 40$. To the best of our knowledge these represent the first rigorous comparisons showing a quantum algorithm outperforming the best known worst-case classical guarantees for an integer combinatorial optimization problem at scale.

To strengthen the classical baseline, we introduce a heuristic algorithm inspired by degree-of-saturation with runtime quadratic in number of edges of the graph. This heuristic empirically surpasses both SDP and QAOA at $p=4$ across all tested instances.
Our DSatur-inspired heuristic also demonstrates strong practical performance: on the \texttt{GSet} benchmark, it achieves cuts within ten percent of the state-of-the-art multi-operator heuristic~\cite{ma2015multiplesearchoperatorheuristic} while running nearly 17,000 times faster on average. This makes it an attractive choice for large-scale applications where runtime is critical. 

By extending our analysis to $p=7$ for $k=3$ and $p=6$ for $k=4$, we estimate that QAOA may overtake this heuristic algorithm at depths $p\lesssim 20$ for regular graphs. This extrapolation remains tentative pending direct verification at larger $p$, which is beyond the reach of our current analytical methods due to exponential scaling of run-time $\mathcal{O}(p^2 k^{2p+2}\log k)$. 

An interesting direction for future work is to generalize the explicit vector algorithm for high‑girth Max-Cut by Thompson–Parekh–Marwaha to Max‑k‑Cut, analogous to how Frieze–Jerrum extended the Goemans–Williamson algorithm beyond $k=2$~\cite{thompson2022explicit}. Such a generalization would yield guarantees for high‑girth random $d$‑regular graphs and provide a provable classical baseline for our QAOA comparisons. 

In conclusion, we established that qudit QAOA for Max-$k$-Cut can surpass the Frieze-Jerrum SDP algorithm at shallow depths ($p=4$) on high-girth regular graphs for $k\in\{3,4\}$, and may surpass our strong heuristic algorithm at modest depths. Our results suggest that moving from binary to integer optimization problems can create new opportunities for quantum advantage.

\section{Methods}

\subsection{Classical Algorithms for Max-$k$-Cut}
\label{sec:classical_algorithms}
In this section, we present the Frieze-Jerrum algorithm for Max-$k$-Cut based on semi-definite programming with randomized rounding and our heuristic algorithm based on the degree of saturation of vertices. Before delving into the details of these algorithms, it is instructive to consider the structural properties of random $d$-regular graphs and consider the possibility of cutting all edges. 

A key concept in this context is the chromatic number of a graph, which is defined as the smallest number of colors required to color the vertices such that no two adjacent vertices share the same color. 
If the chromatic number of a graph is less than or equal to $k$, then it is possible to partition the graph into $k$ independent sets, ensuring that all edges are cut in the Max-$k$-Cut problem. In other words, the optimal cut fraction is unity. It was shown in~\cite{Kemkes2010} that random $d$-regular graphs asymptotically almost surely can be colored with $k$ colors, as long as $ d \leq  2 (k-1) \log(k-1)$.
Nevertheless, the existence of a perfect coloring does not guarantee that any polynomial time algorithm will be able to cut all the edges. 
In the table below we present the degree threshold for $2\leq k \leq 10$ : 
\begin{table}[h!]
\centering
\begin{tabular}{|c|c|c|c|c|c|c|c|c|c|}
\hline
$k$ & 2 & 3 & 4 & 5 & 6 & 7 & 8 & 9 & 10 \\
\hline
$\left\lfloor 2(k-1)\log{(k-1)} \right\rfloor$ & 
0 & 2 & 6 & 11 & 16 & 21 & 27 & 33 & 39 \\
\hline
\end{tabular}
\caption{Degree threshold $\left\lfloor 2(k-1)\log{(k-1)} \right\rfloor$ for perfect $k$-coloring in random regular graphs, for $2 \leq k \leq 10$.}
\label{tab:degree_threshold_max_k_cut}
\end{table}

\subsubsection{Frieze-Jerrum Algorithm}
In this section, we briefly review the algorithm of Ref.~\cite{Frieze1997}. To reformulate the Max-$k$-Cut objective geometrically, we seek a way to encode $k$ labels as vectors such that the indicator for a cut edge can be expressed in terms of their inner products. The standard simplex provides a natural encoding: it places $k$ points in $\mathbb{R}^{k-1}$ so that all are equidistant, maximizing symmetry and ensuring that each label is treated uniformly. 

The construction proceeds as follows. Begin with the $k$ standard basis vectors $e_1, \dots, e_k$ in $\mathbb{R}^k$. Subtract the centroid $c = (1, \dots, 1)/k$ from each basis vector to obtain $q_a' = e_a - c$. These vectors are centered and lie in the $(k-1)$-dimensional subspace orthogonal to $(1, \dots, 1)$. Normalize each $q_a'$ to unit length to get $q_a = q_a' / |q_a'|$. The resulting vectors $q_1, \dots, q_k$ satisfy

\begin{equation}
\langle q_a, q_b \rangle =
\begin{cases}
1, & a = b~, \\[4pt]
-\dfrac{1}{k-1}, & a \neq b~.
\end{cases}
\end{equation}
Assigning $q_{\chi(u)}$ to vertex $u$ for a cut $\chi$, the indicator for a cut edge $(u, v)$ is
\begin{equation}
\mathbf{1}\{\chi(u) \neq \chi(v)\} = \frac{k-1}{k}\,\bigl(1 - \langle q_{\chi(u)}, q_{\chi(v)}\rangle\bigr).
\end{equation}
Then the Max-$k$-Cut objective for  $\chi$ can be written as
\begin{equation}
\frac{k-1}{k}\sum_{\{u, v\}\in E} 1- \langle q_{\chi(u)}, q_{\chi(v)} \rangle~.
\end{equation}

We now proceed to relax this objective by replacing the discrete assignment of $q_{\chi(u)}$ with arbitrary unit vectors $y_u$ in $\mathbb{R}^n$, and define the Gram matrix $X_{uv} = \langle y_u, y_v \rangle$. Furthermore, we have the simplex lower bound on off-diagonal inner products $X_{uv} \ge -\tfrac{1}{k-1}$ for $u \neq v$. The off-diagonal bound ensures that the relaxed solution does not exploit correlations that are impossible in any valid $k$-cut, keeping the SDP relaxation as tight and meaningful as possible. The resulting SDP problem that we obtain is: 

\begin{align}
\text{maximize}\quad &
\frac{k-1}{k}\sum_{\{u, v\}\in E} 1 - X_{uv} \nonumber\\
\text{subject to}\quad &
X \succeq 0,~\mathrm{diag}(X) = 1,~ X_{uv} \ge -\frac{1}{k-1}~(u \neq v)~.
\end{align}

After solving this SDP, we factor the optimal Gram matrix $X^\ast$ as $X^\ast = U U^\top$. The rows of $U$ are then the vertex vectors $y_v$. To obtain a solution, we employ randomized rounding. First, we sample $k$ random unit vectors $g_1, \dots, g_k$ in the embedding space $\mathbb{R}^n$. For each node $u$, we assign the label $a$ that gives the largest inner product $\langle y_u, g_a \rangle$. 

To summarize, the Frieze-Jerrum algorithm has the following steps:
\begin{enumerate}
    \item Encode labels with vertices of a centered regular simplex.
    \item Relax to a SDP over the Gram matrix.
    \item Factor the Gram matrix to get vectors.
    \item Round with random directions to obtain a solution. 
\end{enumerate}

The approximation ratio for any graph averaged over the randomized rounding procedure as proven in the original paper and follow up work are tabulated here~\cite{Frieze1997,de2004approximate,Anjos2020}: 
\begin{table}[h!]
\centering
\begin{tabular}{|c|c|c|c|c|c|c|c|c|c|c|c|}
\hline
$k$ & 2 & 3 & 4 & 5 & 6 & 7 & 8 & 100 \\
\hline
$\alpha_k$ & 0.878	& 0.836	& 0.857	& 0.876	& 0.891 & 0.903 & 0.926
 & 0.990\\
\hline
\end{tabular}
\caption{Worst-case approximation ratio for the Frieze-Jerrum algorithm as a function of $k$.}
\end{table}

This table shows that the approximation ratio obtained with this method is better than random guess for all values of $k$, i.e., $\alpha_k > 1-k^{-1}$. Nevertheless, when $k \gg 1 $ we have that $ \alpha_k - (1-k^{-1}) \sim 2 \ln{(k)}/k^2$. The random guess itself gets an approximation ratio close to 1 when $k \gg 1$, and thus in this case there is not much room to improve over the random assignment.

We note that the Frieze–Jerrum SDP guarantee $\alpha_k$ is a worst–case, graph–agnostic lower bound, whereas our benchmarks are on random $d$–regular graphs. For such instances the true optimum $\mathrm{OPT}_k(G)$ depends on $d$ and is generally unknown for arbitrary $k$ (except that $\mathrm{OPT}_k(G)=|E|$ when $G$ is $k$–colorable). In practice, SDP rounding on $d$–regular graphs typically attains cut fractions substantially above $\alpha_k$.
Consequently, a fair comparison is to the SDP’s achieved cut fraction with randomized rounding on random $d$‑regular graphs of increasing size, averaging over instances and rounding seeds. 

The computational bottleneck of the Frieze-Jerrum algorithm lies in solving the SDP relaxation. For a graph with $n$ vertices, the Gram matrix $X$ is of size $n \times n$, and the SDP involves $\mathcal{O}(n^2)$ variables and constraints. Thus, solving the SDP requires $\mathcal{O}(n^2)$ memory and $\mathcal{O}(n^3)$ time per iteration. For practical purposes, the required number of iterations is almost independent of problem size, ranging between 5 and 50~\cite{Vandenberghe1996}. As a result, in practice the runtime often scales as $\Omega(n^3)$. This makes the Frieze-Jerrum algorithm practical for graphs with up to a few thousand vertices, but challenging for large graphs.







\begin{algorithm}[t]\label{alg:heuristic-k-cut}
\caption{Heuristic Max-$k$-Cut with Local Improvement}
\SetKwInOut{Input}{Input}
\SetKwInOut{Output}{Output}
\Input{Undirected weighted graph $G=(V,E,w)$, integer number of labels $k\ge2$, boolean $\textsc{Improve}$.}
\Output{Cut $\chi:V\to\{1,\dots,k\}$ and cut value $C(\chi)$.}

Initialize $\chi(v)\gets \bot$ for all $v\in V$\;
Initialize weighted label sums $\texttt{wsum}[v,a]\gets 0$ for all $v\in V$, $a\in\{1,\dots,k\}$\;
Compute total weights $\texttt{tw}[v] \gets \sum_{u\in N_G(v)} |w(v,u)|$ for all $v\in V$\;
$\mathcal{U} \gets V$\tcp*{unassigned vertices}
Initialize max-priority queue $H$ with all $v\in V$ keyed by $\bigl(|\{a : \texttt{wsum}[v,a]>0\}|,\, \texttt{tw}[v]\bigr)$\;

\While{$\mathcal{U} \neq \emptyset$}{
  Pop $v$ from $H$ with maximum priority such that $v\in\mathcal{U}$\;
  Let $\texttt{tot} \gets \sum_{a=1}^{k} \texttt{wsum}[v,a]$\;
  Let $a^\star \in \{1,\dots,k\}$ maximize $\texttt{tot} - \texttt{wsum}[v,a]$, breaking ties by smallest $\texttt{wsum}[v,a]$\;
  Set $\chi(v)\gets a^\star$ and $\mathcal{U}\gets \mathcal{U}\setminus\{v\}$\;
  \ForEach{$u\in N_G(v)$}{
    $\texttt{wsum}[u,a^\star] \gets \texttt{wsum}[u,a^\star] + w(v,u)$\;
    \If{$u \in \mathcal{U}$}{
      Update priority of $u$ in $H$ to $\bigl(|\{a : \texttt{wsum}[u,a]>0\}|,\, \texttt{tw}[u]\bigr)$\;
    }
  }
}

\If{\textsc{Improve}}{
  \Repeat{no label changes in a full pass}{
    $\text{changed} \gets \texttt{false}$\;
    \ForEach{$v\in V$ with $\deg_G(v)>0$}{
      Let $\texttt{tw}_v \gets \sum_{u\in N_G(v)} w(v,u)$\;
      Let $a_{\text{cur}} \gets \chi(v)$\;
      Let $a_{\text{best}} \in \{1,\dots,k\}$ maximize $\texttt{tw}_v - \texttt{wsum}[v,a]$\;
      \If{$\texttt{tw}_v - \texttt{wsum}[v,a_{\text{best}}] > \texttt{tw}_v - \texttt{wsum}[v,a_{\text{cur}}]$}{
        \ForEach{$u\in N_G(v)$}{
          $\texttt{wsum}[u,a_{\text{cur}}] \gets \texttt{wsum}[u,a_{\text{cur}}] - w(v,u)$\;
          $\texttt{wsum}[u,a_{\text{best}}] \gets \texttt{wsum}[u,a_{\text{best}}] + w(v,u)$\;
        }
        $\chi(v) \gets a_{\text{best}}$\; $\text{changed} \gets \texttt{true}$\;
      }
    }
  }
}

Compute $C(\chi) \gets \sum_{\{u,v\}\in E : \chi(u)\neq \chi(v)} w(u,v)$\;

\Return $(\chi,C(\chi))$\;
\end{algorithm}

\subsubsection{Heuristic Algorithm}

We now describe the heuristic algorithm that outperforms both SDP and small-depth QAOA. Given an undirected graph $G=(V,E)$ and a fixed number of labels $k$, Algorithm~\ref{alg:heuristic-k-cut} builds a cut $\chi:V\to\{1,\dots,k\}$ by iteratively selecting an unlabeled vertex and assigning it a label.

At each step, a max-priority queue selects the unlabeled vertex $v$ with the greatest number of distinct labels already present among its neighbors, a weighted analogue of \textit{saturation degree} and ties are broken by total incident edge weight $\sum_{u \in N_G(v)} |w(v,u)|$. This vertex-ordering rule is borrowed from the classical DSatur algorithm for graph coloring~\cite{Brelaz1979}, where it serves to label ``hard" vertices (those with highly constrained neighborhoods) as early as possible, while the heap-based implementation ensures efficient retrieval and lazy updates as the neighborhood structure changes.

In our setting the objective is not to obtain a proper coloring with few colors, but to maximize the total weight of cut edges. Accordingly, once $v$ is selected, we choose the label $a \in \{1,\dots,k\}$ that maximizes the weight of incident edges whose endpoints receive different labels, given the current partial labeling of $N_G(v)$. This is implemented via the weighted neighbor sums $\texttt{wsum}[v,a]$, which track the total weight from $v$ to neighbors currently labeled $a$. Writing $\texttt{tot} = \sum_{a=1}^{k} \texttt{wsum}[v,a]$ for the total weight to already-labeled neighbors, the greedy gain for assigning label $a$ is simply $\texttt{tot} - \texttt{wsum}[v,a]$; ties are broken by selecting the label with the smallest $\texttt{wsum}[v,a]$, favoring colors that leave more capacity for future edges to be cut. Intuitively, this rule assigns to $v$ the label that creates maximum crossing weight with respect to the current partial solution, while the DSatur-style vertex selection ensures that such decisions are made when the local neighborhood structure is already informative.

After the initial assignment phase, we run a local improvement phase: for each vertex $v$ we try all $k$ labels and relabel $v$ if this strictly increases the global cut value $C(\chi)$ until no such relabeling improves the cut. This is a standard $1$-opt hill-climbing step, and the procedure terminates at a labeling that is locally optimal with respect to single-vertex relabeling.


Next we analyze the runtime of this algorithm. The construction phase uses a DSatur-style priority queue to select vertices. Each vertex is pushed to the heap at most $\mathcal{O}(\deg(v))$ times (once initially, and once per neighbor assignment), yielding $\mathcal{O}(|V| + |E|)$ total heap operations at $\mathcal{O}(\log |V|)$ cost each. Computing priorities requires $\mathcal{O}(k)$ time to count nonzero entries in the weight-sum array. This gives a total cost of $\mathcal{O}\bigl((|V| + |E|)(k + \log |V|)\bigr)$ for the construction phase. For connected graphs with constant $k$, this simplifies to $\mathcal{O}(|E| \log |V|)$.

In the local improvement phase, one full pass examines all $k$ labels for each vertex using the precomputed weight sums $\texttt{wsum}[v,\cdot]$, costing $\mathcal{O}(k)$ per vertex, and updates neighbor weight sums upon relabeling at cost proportional to the vertex degree. For connected graphs with constant $k$, each pass costs $\mathcal{O}(|E|)$. Since each successful relabeling increases the edges in the cut by at least one, the worst-case number of passes is $\mathcal{O}(|E|)$. In practice, however, only a small constant number of rounds $r$ suffices because the construction phase already produces an almost locally-optimal solution. Assuming $r$ improvement rounds, the local search costs $\mathcal{O}(r|E|)$. Combined with the construction phase, the overall runtime is $\mathcal{O}\bigl(|E|(\log |V| + r)\bigr)$, which simplifies to $\mathcal{O}(|E| \log |V|)$ when $r = \mathcal{O}(\log |V|)$. The algorithm uses $\mathcal{O}(|V|)$ additional space.

We benchmark out algorithm against the Frieze-Jerrum SDP algorithm and QAOA on random regular graphs with varying degrees $d$ and $k$ as shown in \Cref{fig:heuristic_vs_SDP,fig:heuristic_vs_SDP_supp}. In every case, our algorithm outperforms the Frieze-Jerrum SDP algorithm, and for most values of degrees below the colorability threshold of \cref{tab:degree_threshold_max_k_cut} the heuristic algorithm successfully finds the optimum cut. 

\begin{figure*}[htb!]
    \centering
\includegraphics[width=\linewidth]{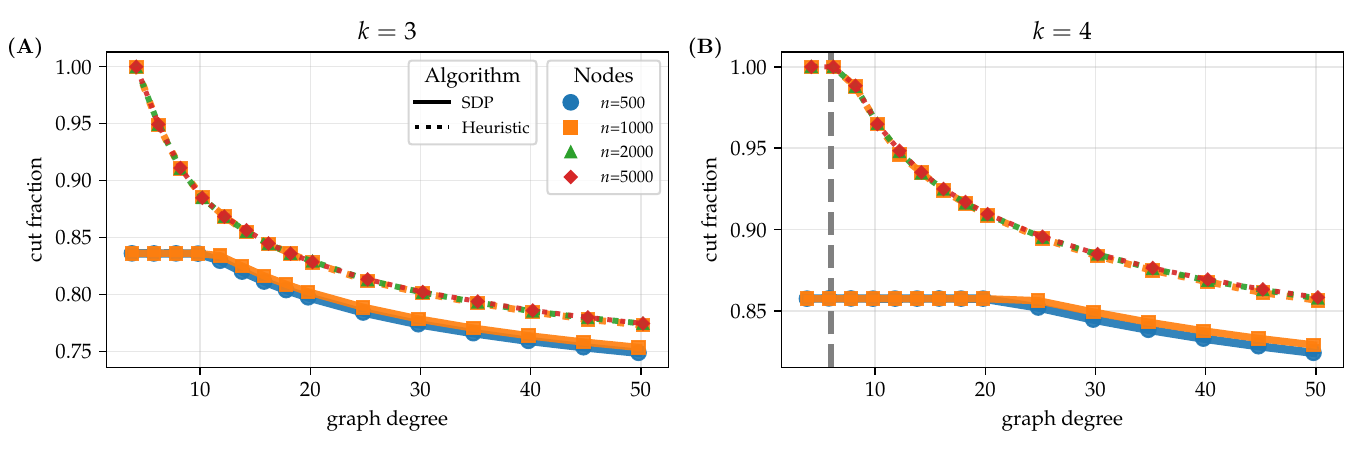}
    \caption{\textbf{Comparison of Heuristic and SDP classical solvers.}
Cut fractions for the Frieze--Jerrum SDP algorithm and the heuristic algorithm are evaluated for $k=3$ and $k=4$ over varying node counts $n$ and graph degrees.
The heuristic algorithm consistently outperforms the SDP solver for all tested values of graph degree. 
The dashed vertical line indicates the bound for the maximum colorable graph degree, as given in \cref{tab:degree_threshold_max_k_cut}.  
 }
 \label{fig:heuristic_vs_SDP}
\end{figure*}

Beyond regular graphs, our \textsc{DSatur} algorithm also demonstrates strong performance on the \texttt{GSet} benchmark: it achieves cuts within $10\%$ of the state-of-the-art \textsc{MOH} algorithm while running four orders of magnitude faster on average. Our algorithm are executed on a x$86\_64$ system with AMD EPYC 7R13 Processor (2.7~GHz), and runs in a single-threaded fashion. We provide the complete evaluation on all 71 instances in \cref{sec:code_plus_table} of the Supplementary Material, along with the full Python implementation of our algorithm. 

\subsection{Analysis of Max-k-Cut QAOA on high-girth regular graphs}
\label{sec:max_k_cutting_qaoa_high_girth_further_analysis}

\subsubsection{Analysis of Max-2-Cut QAOA on high-girth regular graphs}

The analysis of Max-$k$-Cut QAOA developed in this work generalizes the analysis of Max-Cut QAOA introduced in Ref.~\cite{qaoa_maxcut_high_depth} (special case $k = 2$). The cost function \cref{eq:maxk_objective} can in this case be expressed as the qubit diagonal Hamiltonian
\begin{align}
    C_G & = \sum_{\{u, v\} \in E}\frac{1 - Z_uZ_v}{2}.
\end{align}
Assuming a $(d + 1)$-regular graph of girth at least $2p + 2$, the authors derive explicit formulae to classically evaluate the expected QAOA function. By the regularity and high-girth assumptions on the graph, they observe all edges have isomorphic treelike graph neighborhoods under the QAOA lightcone. As a result, each edge $\{u, v\} \in E$ has equal contribution $\bra{\bm{\gamma}, \bm{\beta}}\left(1 - Z_uZ_v\right)/2\ket{\bm{\gamma}, \bm{\beta}}$ to the expected cost function $\bra{\bm{\gamma}, \bm{\beta}}C_G\ket{\bm{\gamma}, \bm{\beta}}$. For high-girth graphs, local neighborhood of any edge is a tree; consequently, the expectation can be evaluated by considering a QAOA state over a regular tree instead of a regular graph. Reasoning with a tree state allows to cast the expectation as a tree tensor network, whose contraction is tractable. For a $(d + 1)$-regular graph and a $p$-layer QAOA ansatz, the corresponding tree has $\mathcal{O}\left(d^p\right)$ sites. However, thanks to the symmetries of regular trees, the tensor network can be contracted by maintaining a single site tensor at a time in memory. These symmetries further enable the handling of tree-shaped tensors as effective matrices, so the cost of tree tensor contractions is independent of the degree~\cite{lower_bounding_max_cut_qaoa}. 

The numerical applicability of the method is limited by the bond dimension $\mathcal{O}\left(4^p\right)$ of the tensor network. This translates to a procedure of time complexity $\mathcal{O}\left(p16^p\right)$, where scaling $16^p = \left(4^p\right)^2$ arises from matrix-vector multiplication, and the extra linear factor $p$ arises from the number of contraction steps, corresponding to the depth of the QAOA light-cone. The memory footprint $\mathcal{O}\left(4^p\right)$ equals the tensor network's bond dimension. In the large-$d$ limit, the time complexity quadratically reduces to $\mathcal{O}\left(p^24^p\right)$ if one only requires the leading order contribution in $1/d$ to the expected cost function. All these methods generalize to computing local observables beyond the energy, as well to any constraint satisfaction problem with treelike constraint graph under the QAOA lightcone. The special example of Max-$q$-XOR, a natural generalization of Max-Cut to clauses containing $q$ variables ($q = 2$ for Max-Cut), is considered in Ref.~\cite{qaoa_maxcut_high_depth}.

\subsubsection{Generalization to Max-k-Cut}

We now introduce our generalization of the analysis of Ref.~\cite{qaoa_maxcut_high_depth} to Max-$k$-Cut (with Max-Cut corresponding to special case $k = 2$). First, we state a general procedure (Proposition~\ref{propEdgeExpectationsQAOANaive}) of time complexity $\mathcal{O}\left(pk^{4p + 4}\right)$ and space complexity $\mathcal{O}\left(k^{2p + 2}\right)$ for evaluating the QAOA expectation for a broad class of objectives defined on graphs, generalizing the finite-degree procedure derived in this earlier work. Second, in the spirit of more recent work Ref.~\cite{lower_bounding_max_cut_qaoa} focused on Max-Cut, we introduce a procedure of quadratically improved time complexity for evaluating QAOA expectations (Proposition~\ref{propEdgeExpectationsQAOAHadamard}) for objective functions that are translation-invariant in $\mathbb{Z}_k$. Stated formally, a function $f\left(x, y\right)$ of two dit variables $x, y \in \mathbb{Z}_k$ is translationally invariant if $f\left(x, y\right)$ only depends on $(x - y)~\textrm{mod} ~ k$. This includes a broad class of cut-based objectives, including Max-$k$-Cut.
This improved computational procedure is enabled by computing certain matrix-vector products using the Hadamard transform.

The following Proposition makes fully explicit the procedure sketched in Section~\ref{sec:max_k_cutting_qaoa_high_girth_analysis}, and generalizing results from Ref.~\cite{qaoa_maxcut_high_depth}, to evaluate QAOA expectations:

\begin{repproposition}{propEdgeExpectationsQAOANaive}[Edge expectations in qudit-QAOA]
Let $G = \left(V, E\right)$ denote a $(d + 1)$-regular graph of girth at least $2p + 2$. Consider the $p$-layer QAOA state over qudits indexed by $V$ defined in \cref{eq:treelike_qaoa_state} and an arbitrary function $\xi: \mathbb{Z}_k \times \mathbb{Z}_k \longrightarrow \mathbb{R}$, possibly distinct from $\varphi$. Then, the expectation of $\xi$ under the QAOA state:
\begin{align}
    \nu & := \bra{\bm{\gamma}, \bm{\beta}}\xi\left(Z_u, Z_v\right)\ket{\bm{\gamma}, \bm{\beta}},
\end{align}
can be estimated by the following iterative procedure. First, let:
\begin{align}
    H^{(0)}\left(\bm{a}\right) & := 1,\\
    \bm{a} & \in \mathbb{Z}_k^{2p + 2}.
\end{align}
Then, for all $1 \leq r \leq p$, let
\begin{align}
    H^{(r)}\left(\bm{a}\right) & := \left(\sum_{\bm{b} \in \mathbb{Z}_k^{2p + 2}}f\left(\bm{b}\right)H^{(r - 1)}\left(\bm{b}\right)e^{i\Phi\left(\bm{a}, \bm{b}\right)}\right)^d,\label{eq:h_iteration}\\
    \bm{a} & \in \mathbb{Z}_k^{2p + 2}.
\end{align}
The observable is then computed as:
\begin{align}
\label{eq:expectation_from_h}
    \nu & = \sum_{\bm{a}, \bm{b} \in \mathbb{Z}_k^{2p + 2}}\xi\left(\bm{a}, \bm{b}\right)f\left(\bm{a}\right)H^{(p)}\left(\bm{a}\right)f\left(\bm{b}\right)H^{(p)}\left(\bm{b}\right)e^{i\Phi\left(\bm{a}, \bm{b}\right)}.
\end{align}
In the above equations, ditstrings are conventionally indexed as:
\begin{align}
    \bm{a} & := \left(a_1, \ldots, a_{p + 1}, a_{-p - 1}, \ldots, a_{-p}\right) \in \mathbb{Z}_k^{2p + 2}.
\end{align}
$f$ is defined by:
\begin{align}
    f\left(\bm{a}\right) & := \prod_{1 \leq t \leq p}\bra{a_t}U_{M, 1}\left(\bm{\beta}_t\right)^{\dagger}\ket{a_{t + 1}}\bra{a_{-t - 1}}U_{M, 1}\left(\bm{\beta}_t\right)\ket{a_{-t}}\nonumber\\
    & \hspace*{20px} \times \mathbf{1}\left[a_{p + 1} = a_{-p - 1}\right]\braket{\psi}{a_1}\braket{a_{-1}}{\psi}.
\end{align}
$\Phi$ is defined from $\varphi$ and $\bm{\gamma}$ by:
\begin{align}
    \Phi\left(\bm{a}, \bm{b}\right) & := \sum_{1 \leq t \leq p}\gamma_t\left(\varphi\left(a_t, b_t\right) - \varphi\left(a_{-t}, b_{-t}\right)\right).
\end{align}
This procedure requires computation time $\mathcal{O}\left(pk^{4p + 4}\right)$ and memory $\mathcal{O}\left(k^{2p + 2}\right)$.
\end{repproposition}

The derivation of this procedure is accomplished by doing an analysis over $\mathbb{Z}_{k}$, and the result for Max-2-Cut can be recovered by specializing to the case of $k=2$~\cite{qaoa_maxcut_high_depth}, as shown in Section \ref{supp:derivation_expect}. The derivation uses the fact that the edge expectation values only depends on a subgraph of depth $2p+1$, as shown in Fig.~\ref{fig:fig_set_up_Max_k_cutting}. This depth can be computed by noting that the depth of tree rooted at a vertex grows by unity after each application of the phaser.

For $p$-layer QAOA, the evaluation procedure for the formula has time and memory complexities $\mathcal{O}\left(p16^p\right)$.
Subsequent work~\cite{lower_bounding_max_cut_qaoa} quadratically reduced the exponential dependence to $\mathcal{O}\left(4^p\right)$ by mapping the computational procedure to an implicit tree tensor network contraction. Our main contribution, expressed in Proposition~\ref{propEdgeExpectationsQAOAHadamard}, is to provide a more straightforward algebraic account of this simplification and show its extension to Max-$k$-Cut and more general qudit problems. The main idea is to use the Hadamard transform to evaluate certain matrix-vector products. The Hadamard transform is a linear transform which for vectors of dimension $N$ can be efficiently evaluated in time $\mathcal{O}\left(N \log N\right)$, a quadratic improvement over using the explicit matrix of the transform. For qudit dimension $k$ (corresponding to the number of labels in our problem), our improved expectation evaluation procedure has complexity $\mathcal{O}\left(pk^{2p + 2}\log(k)\right)$, consistent with earlier result Ref.~\cite{lower_bounding_max_cut_qaoa} at $k = 2$. The complexity is nonetheless independent of the graph degree $d$.

To understand the origin of the simplification, recall the general iteration $r \to r + 1$ from Proposition~\ref{propEdgeExpectationsQAOANaive}, reproduced here for convenience:
\begin{align}
    H^{(r)}\left(\bm{a}\right) & := \left(\sum_{\bm{b} \in \mathbb{Z}_k^{2p + 2}}f\left(\bm{b}\right)H^{(r - 1)}\left(\bm{b}\right)e^{i\Phi\left(\bm{a}, \bm{b}\right)}\right)^d.
\end{align}
The right-hand side of this equation can be interpreted as a matrix-vector product, raised element-wise to the power $d$; matrix and vector have dimension $k^{2p + 2}$, leading to a matrix-vector multiplication cost the square of this dimension.

We now show this matrix-vector product can be evaluated more efficiently for a certain family of cost functions. Namely, we now assume a translation-invariant cost function of the form:
\begin{align}
\label{eq:translation_invariant_cost_function}
    C\left(\bm{x}\right) & := \sum_{\{u, v\} \in E}\varphi\left(x_u - x_v\,\left(\mathrm{mod}\,k\right)\right),
\end{align}
where $\varphi: \mathbb{Z}_k \longrightarrow \mathbb{R}$ is now a function of a single variable $\mathbb{Z}_k$. Max-$k$-Cut corresponds to the choice $\varphi\left(x\right) = \mathbf{1}\left[x \neq 0\,\left(\mathrm{mod}\,k\right)\right]$. For conciseness, one may omit the $\left(\mathrm{mod}\,k\right)$ from the following equations. We now introduce the main tool allowing to simplify iteration \cref{eq:h_iteration}: the Hadamard transform.

\begin{definition}[Hadamard transform on qudits]
\label{def:hadamard_transform}
The Hadamard transform $\bm{H}_{k, n}$ over $n$ qudits of dimension $k$ is a unitary transform over the $n$-qudit space $\left(\mathbb{C}^k\right)^{\otimes n}$. It is defined by matrix elements:
\begin{align}
    \left[\bm{H}_{k, n}\right]_{\bm{x}, \bm{y}} & = \frac{1}{k^{n/2}}\exp\left(-\frac{2\pi i\bm{x} \cdot \bm{y}}{k}\right),\\
    \bm{x} & = \left(x_1, \ldots, x_n\right) \in \mathbb{Z}_k^n,\\
    \bm{y} & = \left(y_1, \ldots, y_n\right) \in \mathbb{Z}_k^n.
\end{align}
The inverse of the Hadamard transform: $\bm{H}_{k, n}^{-1} = \bm{H}_{k, n}^{\dagger}$ is referred to as the inverse Hadamard transform.
\end{definition}

The following definition provides a notational convenience for Hadamard transforms when the dit dimension and number of dits are clear from the context:

\begin{notation}[Hadamard transform of ditstring function]
Given a complex-valued function $\Phi: \mathbb{Z}_k^n \longrightarrow \mathbb{C}$ of a ditstring, the Hadamard transform of $\Phi$, denoted $\hat{\Phi}: \mathbb{Z}_k^n \longrightarrow \mathbb{C}$, is defined by:
\begin{align}
    \hat{\Phi}\left(\bm{x}\right) & := \left[\bm{H}_{k, n}\bm{\Phi}\right]_{\bm{x}},\\
    \bm{x} & := \left(x_1, \ldots, x_n\right) \in \mathbb{Z}_k^n,
\end{align}
where
\begin{align}
    \bm{\Phi} & := \left(\Phi\left(\bm{x}\right)\right)_{\bm{x} \in \mathbb{Z}_k^n}
\end{align}
is the vector of values of $\Phi$. We also define the inverse Hadamard transform of $\Phi$:
\begin{align}
    \widetilde{\Phi}\left(x\right) & := \left[\bm{H}_{k, n}^{\dagger}\bm{\Phi}\right]_{\bm{x}}
\end{align}
from the inverse Hadamard transform of the corresponding vector.
\end{notation}

While the Hadamard transform is defined by a matrix of size $k^n \times k^n$, leading to a naive $k^{2n}$ matrix-vector multiplication cost, the Hadamard transform of any vector can be computed by a more efficient algorithm:

\begin{theorem}[Efficient computation of Hadamard transform {\cite{fwht}}]
\label{th:hadamard_transform_efficient_computation}
Consider the Hadamard transform over $n$ qudits of dimension $k$. There exists an algorithm computing the Hadmard transform of any vector in time $\mathcal{O}\left(k^n\log\left(k^n\right)\right)$ and using space $\mathcal{O}\left(k^n\right)$.
\end{theorem}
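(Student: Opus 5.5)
The approach is to prove the statement in the standard way: factor the Hadamard transform as a tensor product of single-qudit transforms and apply them one dit at a time, the qudit analogue of the Cooley--Tukey / fast Walsh--Hadamard transform. From Definition~\ref{def:hadamard_transform}, the exponent splits as $\bm{x}\cdot\bm{y} = \sum_{j} x_j y_j$. Hence
\begin{align}
    \bm{H}_{k,n} = \bm{H}_{k,1}^{\otimes n} = \prod_{j=1}^{n}\left(\mathbb{1}_k^{\otimes (j-1)}\otimes \bm{H}_{k,1}\otimes \mathbb{1}_k^{\otimes (n-j)}\right),
\end{align}
where the $n$ factors commute because they act on distinct tensor slots. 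The first step is to verify this identity entrywise, which is immediate from the product form $\prod_j k^{-1/2}e^{-2\pi i x_j y_j/k}$.

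The algorithm then applies these $n$ factors in sequence to the input vector $\bm{\Phi}\in\mathbb{C}^{k^n}$. We view $\bm{\Phi}$ as an $n$-index array of shape $k\times\cdots\times k$ stored in a single buffer. Applying the $j$-th factor means replacing every length-$k$ fiber along axis $j$ (there are $k^{n-1}$ of them) by its image under $\bm{H}_{k,1}$. Up to normalization and sign convention, $\bm{H}_{k,1}$ is the $k$-point discrete Fourier transform. Each fiber can therefore be transformed with an FFT in $\mathcal{O}(k\log k)$ time; for arbitrary $k$ one uses Bluestein's or Rader's algorithm. One factor thus costs $\mathcal{O}(k^{n-1}\cdot k\log k)=\mathcal{O}(k^n\log k)$. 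Summing over the $n$ axes gives $\mathcal{O}(n k^n\log k)=\mathcal{O}(k^n\log(k^n))$.

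For memory, each fiber transform needs only $\mathcal{O}(k)$ scratch space and the array can be overwritten axis by axis, so the total space is $\mathcal{O}(k^n)$. The inverse transform $\bm{H}_{k,n}^\dagger$ is handled identically with $\bm{H}_{k,1}^\dagger$, that is, an inverse DFT on each fiber. Correctness of the sequential application follows from the factorization together with associativity of matrix products.

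The only potentially delicate point is the $\mathcal{O}(k\log k)$ cost of the one-dimensional transform when $k$ is not a power of two or is prime. I would cite Bluestein's chirp-$z$ reduction, which turns a length-$k$ DFT into a convolution computed by a power-of-two FFT of length $\mathcal{O}(k)$. Alternatively, since in the intended application $k$ is a fixed constant, even the naive $\mathcal{O}(k^2)$ fiber transform yields $\mathcal{O}(n k^{n+1})$, which is $\mathcal{O}(k^n\log(k^n))$ up to a $k$-dependent constant. Either route establishes the claim.
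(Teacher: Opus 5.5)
Your proof is correct. The paper gives no argument of its own and simply cites the fast Walsh--Hadamard transform, whose standard proof is your Kronecker factorization $\bm{H}_{k,n}=\bm{H}_{k,1}^{\otimes n}$ applied axis by axis, with an $\mathcal{O}(k\log k)$ length-$k$ DFT per fiber. One small correction: your fallback with naive $\mathcal{O}(k^2)$ fiber transforms loses a factor $k/\log k$, so it does not prove the bound as stated uniformly in $k$. Only the FFT/Bluestein route gives the explicit $\log k$ dependence used in Proposition~\ref{propEdgeExpectationsQAOAHadamard}.
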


Applying Theorem~\ref{th:hadamard_transform_efficient_computation} yields a more efficient matrix-vector multiplication procedure for matrices satisfying translation invariance in $\mathbb{Z}_k$:

\begin{lemma}[Matrix-vector multiplication from Hadamard transform]
\label{lemma:matrix_vector_multiplication_hadamard}
Consider a matrix $\bm{M} \in \mathbb{C}^{k^n \times k^n}$ with entries (indexed by ditstrings) of the form:
\begin{align}
    M_{\bm{a}, \bm{b}} & = m\left(\bm{a} - \bm{b}\right),
\end{align}
where $m: \mathbb{Z}_k^n \longrightarrow \mathbb{C}$ can be regarded as an arbitrary complex vector indexed by $\mathbb{Z}_k^n$. Given $m$, there exists an algorithm computing matrix-vector product $\bm{M}\bm{u}$ for any vector $\bm{u} \in \left(\mathbb{C}^k\right)^{\otimes n}$ in time $\mathcal{O}\left(k^n\log\left(k^n\right)\right)$ and memory $\mathcal{O}\left(k^n\right)$.
\end{lemma}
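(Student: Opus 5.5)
The plan is the convolution theorem over the group $\mathbb{Z}_k^n$. The matrix $\bm{M}$ with $M_{\bm{a},\bm{b}} = m(\bm{a}-\bm{b})$ is a group convolution operator: $(\bm{M}\bm{u})_{\bm{a}} = \sum_{\bm{b}} m(\bm{a}-\bm{b})u_{\bm{b}} = (m * u)(\bm{a})$. It is therefore diagonalized by the characters of $\mathbb{Z}_k^n$, which are exactly the columns of $\bm{H}_{k,n}$. Once this is established, the algorithm is three Hadamard transforms plus a pointwise product. Each transform costs $\mathcal{O}(k^n\log(k^n))$ time and $\mathcal{O}(k^n)$ memory by Theorem~\ref{th:hadamard_transform_efficient_computation}.

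\textbf{Step 1: diagonalization.} Put $\omega = e^{-2\pi i/k}$ and write $\bm{x}\cdot\bm{y}$ mod $k$, which is well defined. For a fixed $\bm{x}$, compute
\begin{align}
\left[\bm{H}_{k,n}\bm{M}\bm{u}\right]_{\bm{x}} = \frac{1}{k^{n/2}}\sum_{\bm{a},\bm{b}}\omega^{\bm{x}\cdot\bm{a}}m(\bm{a}-\bm{b})u_{\bm{b}}.
\end{align}
Substitute $\bm{c} = \bm{a}-\bm{b}$; this is a bijection of $\mathbb{Z}_k^n$ for each fixed $\bm{b}$. Using $\omega^{\bm{x}\cdot(\bm{c}+\bm{b})} = \omega^{\bm{x}\cdot\bm{c}}\omega^{\bm{x}\cdot\bm{b}}$, the double sum factorizes, giving
\begin{align}
\left[\bm{H}_{k,n}\bm{M}\bm{u}\right]_{\bm{x}} = k^{n/2}\,\hat{m}(\bm{x})\,\hat{u}(\bm{x}).
\end{align}
Since $\bm{H}_{k,n}$ is unitary, inverting gives $\bm{M}\bm{u} = k^{n/2}\,\bm{H}_{k,n}^{\dagger}\left(\hat{m}\odot\hat{u}\right)$, where $\odot$ is the entrywise product.

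\textbf{Step 2: algorithm and cost.} The procedure is:
\begin{enumerate}
\item compute $\hat{m}$;
\item compute $\hat{u}$;
\item multiply them entrywise and scale by $k^{n/2}$, in time $\mathcal{O}(k^n)$;
\item apply the inverse transform.
\end{enumerate}
The inverse transform $\bm{H}_{k,n}^{\dagger}$ has the same structure as $\bm{H}_{k,n}$, with $\omega$ replaced by $\bar\omega$. Equivalently, $\bm{H}^{\dagger}\bm{v} = \overline{\bm{H}\bar{\bm{v}}}$, so it too runs in $\mathcal{O}(k^n\log(k^n))$ by Theorem~\ref{th:hadamard_transform_efficient_computation}. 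Three transforms and one pointwise product give total time $\mathcal{O}(k^n\log(k^n))$. Memory is a constant number of length-$k^n$ arrays, so $\mathcal{O}(k^n)$. The vector $\hat{m}$ depends only on $m$, so it can be precomputed when the same $\bm{M}$ is applied repeatedly, as in the iteration \cref{eq:h_iteration}. This does not change the asymptotics.

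\textbf{Main obstacle.} The only delicate point is the bookkeeping in Step 1: matching the sign convention of the transform to get $\hat{m}$ rather than $\widetilde{m}$, getting the normalization factor $k^{n/2}$ right, and justifying the change of variables modulo $k$. I would check the identity on a single dit ($n=1$) and then extend it by the tensor-product structure $\bm{H}_{k,n} = \bm{H}_{k,1}^{\otimes n}$. Both this factorization and the homomorphism property of the characters hold componentwise in $\mathbb{Z}_k$.
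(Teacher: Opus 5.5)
Your proposal is correct and is essentially the paper's argument. Both establish the convolution identity $\bm{M}\bm{u} = k^{n/2}\,\bm{H}_{k,n}^{\dagger}\left(\hat{m}\odot\hat{u}\right)$ and then obtain the stated cost from fast Hadamard transforms plus an $\mathcal{O}(k^n)$ pointwise product; the only cosmetic difference is that the paper expands $m$ and $u$ in the character basis and uses orthogonality $\sum_{\bm{b}} e^{2\pi i(\bm{y}-\bm{x})\cdot\bm{b}/k} = k^n\mathbf{1}[\bm{x}=\bm{y}]$, whereas you transform $\bm{M}\bm{u}$ directly and factor the sum via $\bm{c}=\bm{a}-\bm{b}$, and your factor $k^{n/2}$ matches the one the paper's proof derives.
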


We now apply Lemma~\ref{lemma:matrix_vector_multiplication_hadamard} to obtain a more efficient computation of iteration step \cref{eq:h_iteration} from Proposition~\ref{propEdgeExpectationsQAOANaive}. More specifically, we need to improve the computation of
\begin{align}
\label{eq:h_iteration_matrix_vector_product}
    \sum_{\bm{b} \in \mathbb{Z}_k^n}f\left(\bm{b}\right)H^{(r - 1)}\left(\bm{b}\right)e^{i\Phi\left(\bm{a}, \bm{b}\right)}.
\end{align}
This can be written as a matrix-vector product $\bm{M}\bm{u}$, for matrix
\begin{align}
    M_{\bm{a}, \bm{b}} & := e^{i\Phi\left(\bm{a}, \bm{b}\right)}
\end{align}
and vector
\begin{align}
    u_{\bm{b}} & := f\left(\bm{b}\right)H^{(r - 1)}(\bm{b}).
\end{align}
We need to verify the matrix is of the form of Lemma~\ref{lemma:matrix_vector_multiplication_hadamard}, and its defining vector $m$ can be efficiently maintained. Expanding definition of $\Phi\left(\bm{a}, \bm{b}\right)$ from Proposition~\ref{propEdgeExpectationsQAOANaive} and plugging in the translation-invariant cost function ansatz \cref{eq:translation_invariant_cost_function}:
\begin{align}
    e^{i\Phi\left(\bm{a}, \bm{b}\right)} & = \exp\left(i\sum_{1 \leq t \leq p}\gamma_t\left(\varphi\left(a_t - b_t\right) - \varphi\left(a_{-t} - b_{-t}\right)\right)\right),
\end{align}
so that one may choose:
\begin{align}
    m\left(\bm{c}\right) & := \exp\left(i\sum_{1 \leq t \leq p}\gamma_t\left(\varphi(c_t) - \varphi\left(c_{-t}\right)\right)\right).
\end{align}
Due to independence from the iteration step $r$; this vector and its Hadamard transform need only be computed once. Hence, Lemma~\ref{lemma:matrix_vector_multiplication_hadamard} for efficient matrix-vector multiplication applies, allowing to evaluate matrix-vector product \cref{eq:h_iteration_matrix_vector_product} in time $\mathcal{O}\left(k^{2p + 2}\log\left(k^{2p + 2}\right)\right) = \mathcal{O}\left(pk^{2p + 2}\log(k)\right)$ and space $\mathcal{O}\left(k^{2p + 2}\right)$. This concludes the improvement of main iteration step \cref{eq:h_iteration}. Let us now discuss the computation of an edge expectation given $H^{(p)}$ as outlined in \cref{eq:expectation_from_h}:
\begin{align}
\label{eq:expectation_from_h_restated}
    \nu & = \sum_{\bm{a}, \bm{b} \in \mathbb{Z}_k^{2p + 2}}f\left(\bm{a}\right)H^{(p)}\left(\bm{a}\right)f\left(\bm{b}\right)H^{(p)}\left(\bm{b}\right)e^{i\Phi\left(\bm{a}, \bm{b}\right)}\xi\left(\bm{a} - \bm{b}\right),
\end{align}
where we now assumed edge function $\xi$ to be translation-invariant in $\mathbb{Z}_k$. In this naive representation, computation has time complexity $\mathcal{O}\left(k^{4p + 4}\right)$. This can be reduced to $\mathcal{O}\left(pk^{2p + 2}\log(k)\right)$ invoking again Lemma~\ref{lemma:matrix_vector_multiplication_hadamard}. Indeed, sum 
\begin{align}
    \sum_{\bm{b} \in \mathbb{Z}_k^{2p + 2}}f\left(\bm{b}\right)H^{(p)}\left(\bm{b}\right)e^{i\Phi\left(\bm{a}, \bm{b}\right)}\xi\left(\bm{a} - \bm{b}\right)
\end{align}
may be interpreted as a matrix-vector product $\bm{M}\bm{u}$, where now 
\begin{align}
    M_{\bm{a}, \bm{b}} & = e^{i\Phi\left(\bm{a}, \bm{b}\right)}\xi\left(\bm{a} - \bm{b}\right),
\end{align}
and the definition of $\bm{u}$ remains unchanged. $\bm{M}$ is clearly of the translation-invariant form required by Lemma~\ref{lemma:matrix_vector_multiplication_hadamard}, therefore allowing a computation of the matrix-vector product in time $\mathcal{O}\left(pk^{2p + 2}\log(k)\right)$ and space $\mathcal{O}\left(k^{2p + 2}\right)$. From this computation, it remains to sum over $\bm{a}$ in \cref{eq:expectation_from_h_restated}, which requires time $\mathcal{O}\left(k^{2p + 2}\right)$. In summary, we obtained:

\begin{repproposition}{propEdgeExpectationsQAOAHadamard}[Edge expectations in qudit-QAOA for edge costs translation-invariant in $\mathbb{Z}_k$]
Recall the setting and notations from Proposition~\ref{propEdgeExpectationsQAOANaive}, and further assume edge penalties translation-invariant in $\mathbb{Z}_k$, both for the ansatz cost function and for the loss function:
\begin{align}
    \varphi\left(x_u, x_v\right) & \longrightarrow \varphi\left(x_u - x_v\right),\\
    \xi\left(x_u, x_v\right) & \longrightarrow \xi\left(x_u - x_v\right).
\end{align}
Then, for all edge $\{u, v\}$, local expectation
\begin{align}
    \bra{\bm{\gamma}, \bm{\beta}}\xi\left(Z_u - Z_v\right)\ket{\bm{\gamma, \bm{\beta}}}
\end{align}
can be classically evaluated by a procedure of time complexity $\mathcal{O}\left(p^2k^{2p + 2}\log(k)\right)$ and (unchanged) memory complexity $\mathcal{O}\left(k^{2p + 2}\right)$.
\begin{proof}
By Lemma~\ref{lemma:matrix_vector_multiplication_hadamard} and translation invariance of $\varphi$, iteration \cref{eq:h_iteration}, expressing $H^{(r)}$ from $H^{(r - 1)}$, can be evaluated in time $\mathcal{O}\left(k^{2p + 2}\log(k^{2p + 2})\right) = \mathcal{O}\left(pk^{2p + 2}\log(k)\right)$. $p$ such iterations must be evaluated, resulting in a total time complexity $\mathcal{O}\left(p^2k^{2p + 2}\log(k)\right)$. Finally, when $\xi$ is translation-invariant in $\mathbb{Z}_k$, by applying Lemma~\ref{lemma:matrix_vector_multiplication_hadamard} again, the computation of the expected cost from $H^{(p)}$ (\cref{eq:expectation_from_h}) requires additional time $\mathcal{O}\left(pk^{2p + 2}\log(k)\right)$, negligible compared to the $p$ iterations.
\end{proof}
\end{repproposition}

\section*{Acknowledgments}
We thank Stephan Eidenbenz for helpful feedback on the manuscript. We thank Rob Otter for the executive support of the work and invaluable feedback on this project. 
The authors thank their colleagues at the Global Technology Applied Research center of JPMorganChase for their support.

\bibliography{main.bib}
\section*{Disclaimer}
This paper was prepared for informational purposes by the Global Technology Applied Research center of JPMorgan Chase \& Co. This paper is not a product of the Research Department of JPMorgan Chase \& Co. or its affiliates. Neither JPMorgan Chase \& Co. nor any of its affiliates makes any explicit or implied representation or warranty and none of them accept any liability in connection with this paper, including, without limitation, with respect to the completeness, accuracy, or reliability of the information contained herein and the potential legal, compliance, tax, or accounting effects thereof. This document is not intended as investment research or investment advice, or as a recommendation, offer, or solicitation for the purchase or sale of any security, financial instrument, financial product or service, or to be used in any way for evaluating the merits of participating in any transaction.

\section*{Author Contributions}

S.B developed the theoretical framework for high-girth Max-$k$-Cut.
S.B, S.O, and Y.J studied the different Mixers and computed optimal parameters for Qudit QAOA.
A.A implemented the Frieze-Jerrum SDP rounding algorithm.
A.A, S.O, and Y.J developed the classical heuristic algorithm.
All authors contributed to the writing of the manuscript and shaping of the project.

\clearpage
\newpage

\onecolumngrid
\begin{center}
    \textbf{\large Supplementary Material for: Quantum Approximate Optimization of Integer Problems on Graphs 
and Surpassing Semidefinite Programming for Max-k-Cut}
\end{center}

\section{Gate-Level Implementation of Max-$k$-Cut on Qubit Hardware}

Current quantum hardware is predominantly qubit-based, which motivates the study of QAOA implementations for Max-$k$-Cut on such systems. 
Here, we describe a resource-efficient approach for the implementation of  Max-$k$-Cut on qubit-based hardware when $k$ is a power of two based on Ref~\cite{tsvelikhovskiy2026qaoamixer,fuchs_2021_maxkcut, fuchs_2025_maxkcut}.
In this setting, each vertex is encoded as a qudit using $\log_2 k$ qubits, enabling a direct representation of all $\log_2 k$ possible labels.

Generally, for the case of $\log_2 k$ qubits per qudit, the projector in \cref{eq:cost_hamiltonian} can be written as,
\begin{align}
  P = \sum_{a = \mathbb{Z}_k} \op{a,a}{a,a}
  \simeq \bigotimes^{\log_2 k-1}_{\ell=0} \sum_{a_{\ell} \in \{0,1\} } \op{a_\ell,a_\ell}{a_\ell,a_\ell},
\end{align}
where $\simeq$ denotes equality up to a permutation of tensor factors.

The QAOA phaser operator is $e^{-i\gamma H_C} = \prod_{\set{u,v} \in E} e^{-i\gamma P_{u,v}}$, where each term in the product can be implemented using the circuit in  \cref{fig:supplement_gate_decomposition}\textbf{\small{(A)}}.
Each term in QAOA phaser  can be implemented using Toffoli and CNOT gates.
For case when $k$ is a power of 2, a total of $2\log_2 k$ CX gates and two $C^{\log_2 k}X$ gates are required, where each $C^{\log_2 k}X$ gate can be decomposed into $\log_2 k-1$ Toffoli gates.
Furthermore the structure of the circuit allows one to further reduce the required gates by half by using the temporary-AND-compute-and-uncompute (TACU) gates in Ref.~\cite{litinski2022active}.
Therefore, the QAOA operator can be implemented with only logarithmic overhead in the number of gates as the qudit dimension $k$ increases.

\begin{figure*}[htb!]
    \centering
\includegraphics[width=\textwidth]{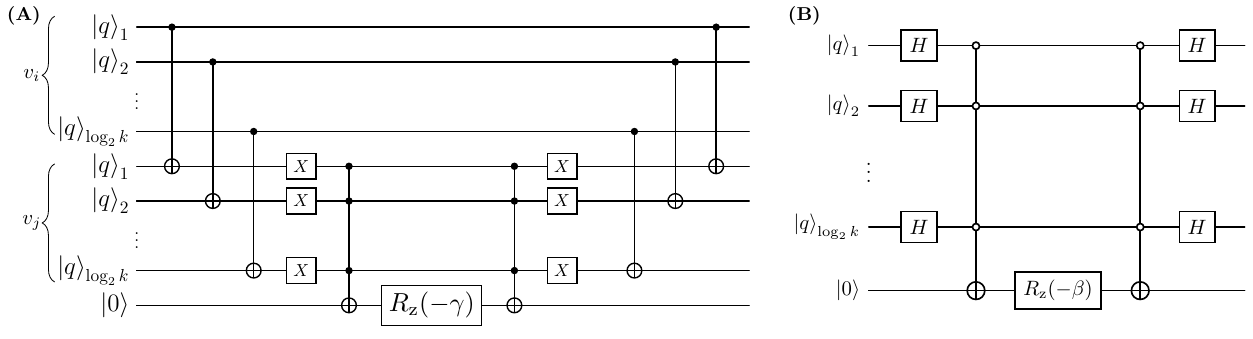}
\caption{\textbf{Gate-level implementation of the Max-$k$-Cut on qubit hardware.}
\textbf{(A)} The circuit that implements one of the terms of the QAOA phaser, $e^{-i\gamma P_{v_i,v_j}}$~\cite{fuchs_2021_maxkcut}.
This requires in total of $2\times\log_2 k$ CNOT gates and a 2 $C^{\log_2 k}X$ gate, where each $C^{\log_2 k}X$ gate can be decomposed into $\log_2 k-1$ Toffoli gates.
\textbf{(B)} The circuit that implements the Grover mixer which requires a total of  2$ C^{\log_2 k +1}X$ gate when $k$ is a power of 2~\cite{tsvelikhovskiy2026qaoamixer}.} 
 \label{fig:supplement_gate_decomposition}
\end{figure*}
\label{sec:gate_level_implementation}

The transverse field  mixer in \cref{eq:transverse_field_mixer} can be implemented only when $k$ is a power of two.
For this mixer, the implementation  is equivalent to applying the standard qubit mixer independently on each of the $\log_2 k$ qubits.
 The Grover mixer in \cref{eq:grover} can be implemented efficiently on qubit-based hardware~\cite{tsvelikhovskiy2026qaoamixer}. 
 The corresponding unitary, $\exp(i\beta\op{+})$, is realized by first applying Hadamard gates to each qubit, thereby transforming the computational basis states into the $\ket{+}$ basis. 
This is followed by a multi-qubit Toffoli gate ($C^{\log_2 k+1}X$), which ensures that the phase is applied only when all qubits are in the $\ket{+}$ state.
The multi-qubit Toffoli gate can be decomposed into $\log_2 k$ standard Toffoli gates, allowing for an efficient implementation of the Grover mixer. 
The circuit diagram for this construction is shown in \cref{fig:supplement_gate_decomposition}\textbf{\small{(B)}}.
Unlike the transverse field and Grover mixers, the BKKT mixer in \cref{eq:BKKT_mixer}, introduces a large overhead when implemented on qubit hardware.

\section{Results for other values of $k$}
\label{sec:other_values_of_k}
\begin{figure*}[htb!]
    \centering
\includegraphics[width=\textwidth]{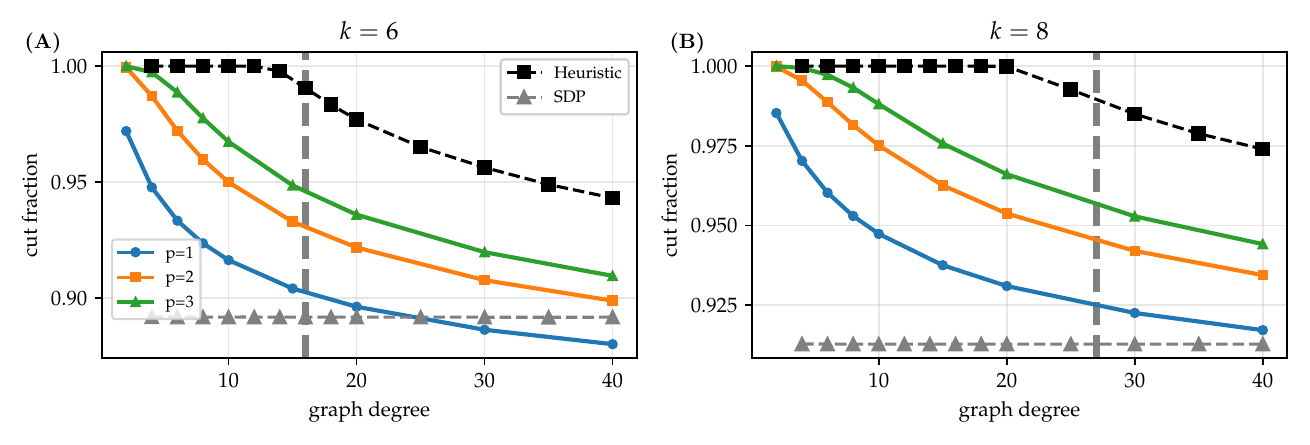}
\caption{\textbf{Performance of qudit QAOA for Max-$k$-Cut.}
Cut fractions for the Grover mixer are evaluated as a function of graph degree for $k=6$ and $k=8$.
Consistent with results for $k=3$ and $k=4$, qudit QAOA outperforms the semidefinite programming (SDP) baseline for these values of $k$. However, within the circuit depths examined here (up to $p=3$), the heuristic algorithm attains higher cut fractions than QAOA. The dashed vertical line marks the bound on the maximum colorable graph degree, as given in \cref{tab:degree_threshold_max_k_cut}.}
 \label{fig:grovervsbaselines}
\end{figure*}

\begin{figure*}[htb!]
    \centering
\includegraphics[width=\linewidth]{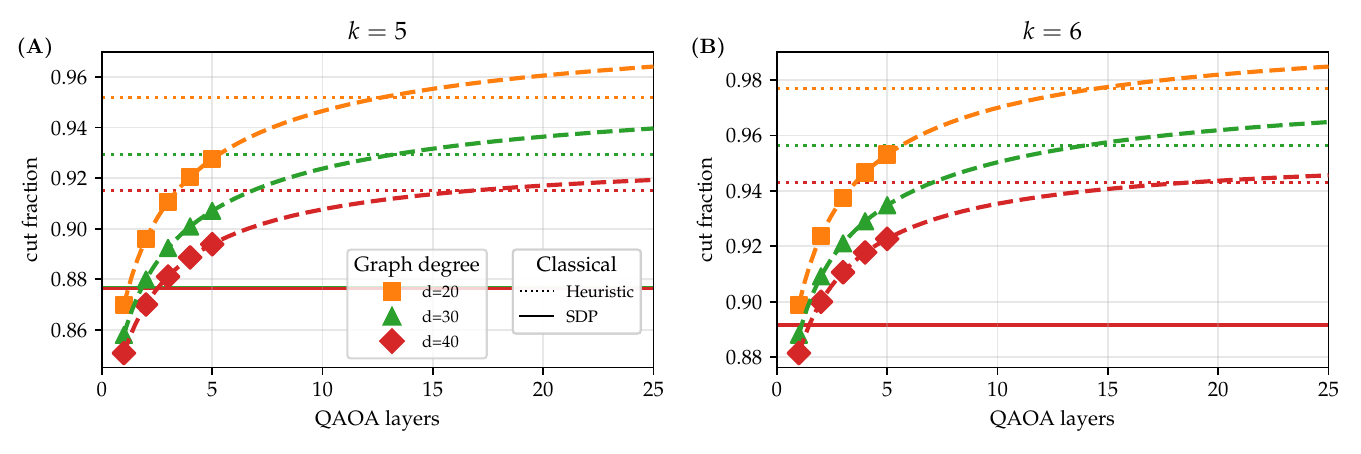}
   \caption{\textbf{QAOA for Max-$k$-Cut at larger circuit depths for the Grover mixer.}
QAOA performance on Max-$k$-Cut using the Grover mixer is examined as circuit depth $p$ increases, with optimal parameters studied for $k=5$ ($p=5$) in \textbf{(\small A)} and $k=6$ ($p=4$) in \textbf{(\small B)}. 
 Performance improves with depth across the tested graph degrees. To explore behavior at larger $p$, we fit the cut fraction to the function in \cref{eq:fitting_equation}.
The fit suggests a finite threshold depth $p_{\mathrm{th}}$ beyond which QAOA surpasses the heuristic algorithm for a given $k$. 
Claims beyond the simulated depths rely on model extrapolation and require validation through direct classical statevector or tensor-network simulation at larger $p$, including parameter optimization.
}
 \label{fig:large_p_other_k}
\end{figure*}

\begin{figure*}[htb!]
    \centering
\includegraphics[width=\linewidth]{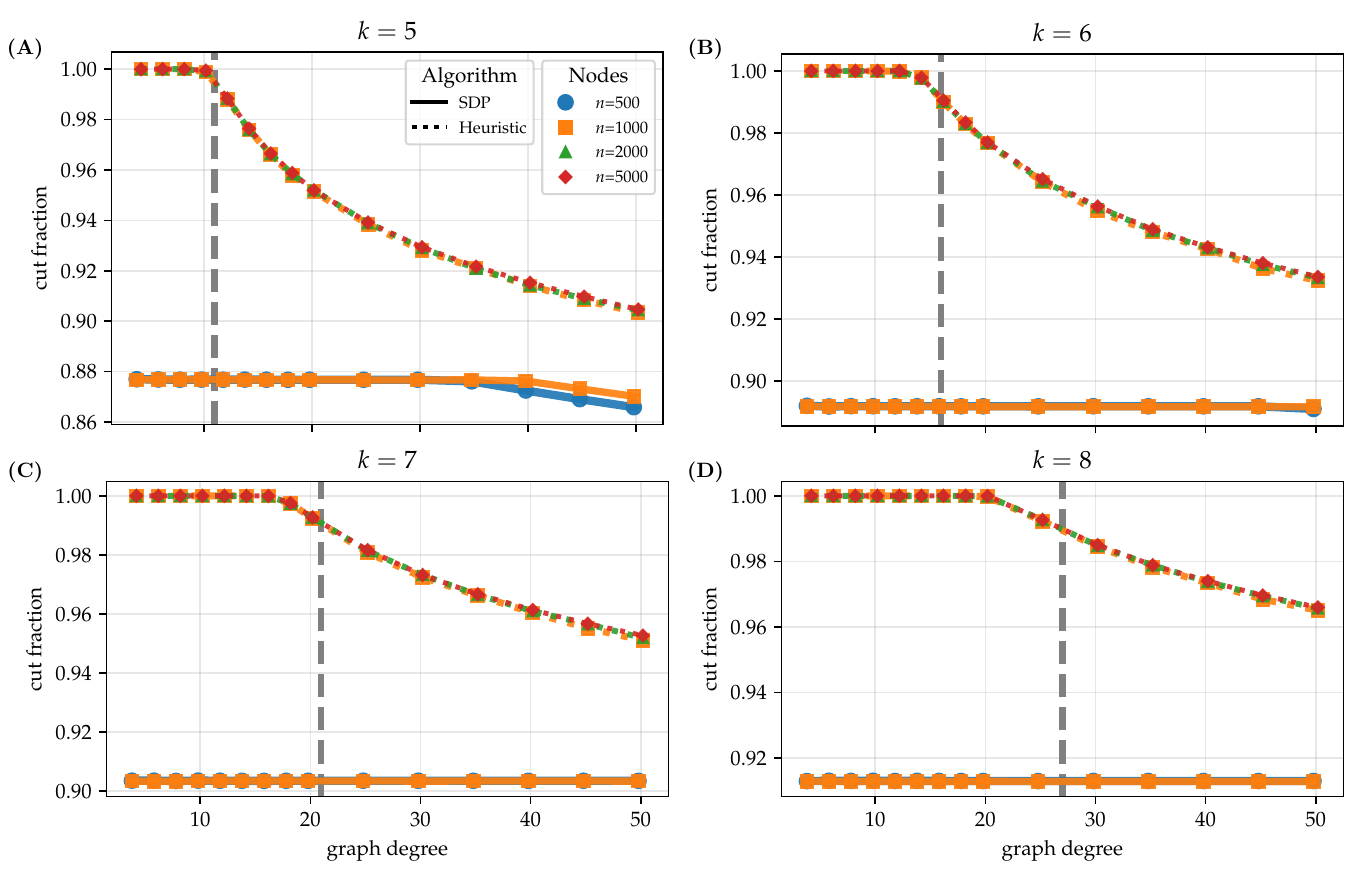}
\caption{\textbf{Comparison of classical algorithms for Max-$k$-Cut across $k=5$ to $k=8$.}
Cut fractions achieved by the semidefinite programming (SDP) and heuristic algorithms on random regular graphs are presented for $k \in \{5,6,7,8\}$ across varying graph degrees and node counts $n$.
Similar to $k=3$ and $k=4$, the heuristic algorithm consistently outperforms the SDP solver for all tested values of graph degree. 
For a broad range of graph degrees, the heuristic algorithm achieves cut fraction of $1$ close to the upper bound on degree as established in~\cite{kemkes2010chromatic}. 
The dashed vertical line indicates the bound for the maximum colorable graph degree, as given in \cref{tab:degree_threshold_max_k_cut}.}
 \label{fig:heuristic_vs_SDP_supp}
\end{figure*}

In this section, we extend our analysis of the Max-$k$-Cut problem to additional values of $k$ beyond those discussed in the main text.
Our objective is to demonstrate the generalizability and scalability of the qudit-based QAOA approach studied in this work.

Fig. \ref{fig:grovervsbaselines} shows the performance of the Grover-mixer QAOA for Max-$k$-Cut at $k=6$ and $k=8$ across graph degrees. As for lower $k$, qudit QAOA consistently outperforms the semidefinite programming (SDP) baseline, indicating quantum advantage relative to SDP in these regimes. However, within the circuit depths examined here (up to $p=3$), the heuristic algorithm attains higher cut fractions than QAOA. 

Fig.~\ref{fig:large_p_other_k} shows how increasing QAOA circuit depth affects Max-$k$-Cut performance for $k=5$ and $k=6$: as $p$ increases, the best-found cut fraction improves across all tested graph degrees.
To quantify performance at large $p$, we fit the cut fraction with
\begin{align}
F(p) = \frac{m}{p^{a} + c} + b,
\label{eq:fitting_equation}
\end{align}
treating $m$, $a$, $c$, and $b$ as free parameters~\cite{boulebnane2025evidencequantumapproximateoptimization}. 
In our simulations, optimal parameters were studied up to $p=5$ for $k=5$ and $p=4$ for $k=6$. The fitted form supports the existence of a finite threshold depth $p_{\mathrm{th}}$ beyond which QAOA surpasses the heuristic algorithm for a given $k$. 
Claims beyond the simulated depths rely on model extrapolation; validating these trends will require direct classical statevector or tensor-network simulation of full QAOA circuits at larger depths, including QAOA parameter optimization.

In Fig.~\ref{fig:heuristic_vs_SDP_supp}, we studied the classical algorithms we considered for  Max-$k$-Cut for $k=5$ to $8$.
For all values of $k$ and graph degree $d$, the heuristic algorithm we introduced in this work outperform the Frieze-Jerrum SDP algorithm.
Furthermore, the heuristic algorithm achieves cut fraction value of $1$ for large range of graph degrees.
This near-ideal performance is in strong agreement with theoretical predictions for random regular graphs in Ref~\cite{kemkes2010chromatic}. 

\begin{table}[htb!]
\centering
\label[apptab]{cut_fraction_sdp_girth}
\begin{tabular}{cc|cccccc}
\hline
Degree & Girths & $k=3$ & $k=4$ & $k=5$ & $k=6$ & $k=7$ & $k=8$ \\
\hline
3 & 3-10 & $0.8365 \pm 0.0004$ & $0.8559 \pm 0.0002$ & $0.8768 \pm 0.0004$ & $0.8897 \pm 0.0002$ & $0.9029 \pm 0.0002$ & $0.9142 \pm 0.0002$ \\
4 & 3-8 & $0.8366 \pm 0.0004$ & $0.8559 \pm 0.0003$ & $0.8769 \pm 0.0002$ & $0.8896 \pm 0.0002$ & $0.9026 \pm 0.0002$ & $0.9143 \pm 0.0003$ \\
5 & 3-7 & $0.8371 \pm 0.0006$ & $0.8560 \pm 0.0004$ & $0.8773 \pm 0.0003$ & $0.8894 \pm 0.0002$ & $0.9026 \pm 0.0002$ & $0.9144 \pm 0.0003$ \\
6 & 3-6 & $0.8369 \pm 0.0007$ & $0.8569 \pm 0.0002$ & $0.8774 \pm 0.0002$ & $0.8901 \pm 0.0004$ & $0.9030 \pm 0.0002$ & $0.9142 \pm 0.0003$ \\
7 & 3-6 & $0.8368 \pm 0.0004$ & $0.8569 \pm 0.0003$ & $0.8772 \pm 0.0004$ & $0.8906 \pm 0.0002$ & $0.9032 \pm 0.0004$ & $0.9139 \pm 0.0002$ \\
8 & 3-5 & $0.8369 \pm 0.0006$ & $0.8566 \pm 0.0002$ & $0.8774 \pm 0.0001$ & $0.8905 \pm 0.0002$ & $0.9029 \pm 0.0002$ & $0.9139 \pm 0.0002$ \\
9 & 3-5 & $0.8367 \pm 0.0005$ & $0.8566 \pm 0.0006$ & $0.8776 \pm 0.0001$ & $0.8904 \pm 0.0002$ & $0.9028 \pm 0.0003$ & $0.9136 \pm 0.0001$ \\
10 & 3-5 & $0.8364 \pm 0.0005$ & $0.8568 \pm 0.0001$ & $0.8769 \pm 0.0001$ & $0.8905 \pm 0.0003$ & $0.9030 \pm 0.0002$ & $0.9137 \pm 0.0003$ \\
\hline
\end{tabular}
\caption{Mean cut fraction of the Frieze-Jerrum algorithm for Max-$k$-Cut on $d$-regular graphs with $n=1000$ vertices and varying girths. 
For each degree $d$ and partition size $k$, we report the mean $\pm$ standard deviation across all feasible girths. 
The ``Girths'' column shows the range of girths tested for each degree. 
Note that generating graphs with higher girths becomes increasingly difficult at larger degrees for this graph size, 
resulting in a narrower range of feasible girths as degree increases.}
\end{table} 

To evaluate the performance of the Frieze-Jerrum algorithm on graphs with controlled girth, we employ a greedy random algorithm~\cite{linial2020} to generate $d$-regular graphs with specific girth $g$ and $n=1000$ vertices. For each feasible combination of degree $d$ and girth $g$, we generate 20 distinct graphs. The feasibility of generating graphs with a given degree-girth combination is fundamentally limited by the Moore bound~\cite{Damerell_1973}, which establishes a lower bound on the number of vertices required for a $d$-regular graph with girth $g$. Consequently, at $1000$ vertices, generating graphs with both high degree and high girth becomes increasingly difficult, which is reflected in the narrowing range of feasible girths for larger degrees. For comparison, we also generate random $d$-regular graphs without any girth constraints, which are used for the Frieze-Jerrum baseline evaluation. As demonstrated in the table above, the mean cut fraction achieved by the Frieze-Jerrum algorithm exhibits remarkably little variation across graphs of different girths at the same degree across all degrees for different values of $k$.

\section{Heuristic DSatur Algorithm on Weighted Graphs}
\label{sec:code_plus_table}
We present the full Python code for our \textsc{DSatur} heuristic algorithm. The implementation uses a binary heap through the \texttt{heapq} module and uses \texttt{networkx} for reading and processing graphs \cite{SciPyProceedings_11}. The comparison to other algorithms on the \texttt{GSet} benchmark for Max-3-Cut is presented in the table on the next page.

\begin{table}[htbp]
\centering
\label{tab:results}
\footnotesize
\setlength{\tabcolsep}{3pt}
\begin{tabular}{@{}lrrlrrrrrrrr@{}}
\toprule
 & & & & \multicolumn{2}{c}{DSatur} & \multicolumn{2}{c}{MOH} & \multicolumn{2}{c}{Rank-1} & \multicolumn{2}{c}{Greedy} \\
\cmidrule(lr){5-6} \cmidrule(lr){7-8} \cmidrule(lr){9-10} \cmidrule(lr){11-12}
Instance & Vertices & Edges & Type & Cut & $t$ & Cut & $t$ & Cut & $t$ & Cut & $t$ \\
\midrule
G1 & 800 & 19176 & Erdos-Renyi & 14796 & 0.14 & 15165 & 605 & 13331 & 1.08 & 14859 & 16.3 \\
G2 & 800 & 19176 & Erdos-Renyi & 14883 & 0.09 & 15172 & 539 & 13291 & 1.06 & 14790 & 16.3 \\
G3 & 800 & 19176 & Erdos-Renyi & 14904 & 0.09 & 15173 & 227 & 13299 & 1.06 & 14795 & 16.3 \\
G4 & 800 & 19176 & Erdos-Renyi & 14901 & 0.08 & 15184 & 657 & 13316 & 1.05 & 14806 & 9.90 \\
G5 & 800 & 19176 & Erdos-Renyi & 14866 & 0.08 & 15193 & 81.0 & 13334 & 1.05 & 14835 & 11.5 \\
G6 & 800 & 19176 & Erdos-Renyi & 2292 & 0.08 & 2632 & 270 & 992 & 0.64 & 2082 & 11.4 \\
G7 & 800 & 19176 & Erdos-Renyi & 2140 & 0.15 & 2409 & 491 & 992 & 0.64 & 2082 & 11.4 \\
G8 & 800 & 19176 & Erdos-Renyi & 2077 & 0.09 & 2428 & 682 & 989 & 0.63 & 2079 & 11.4 \\
G9 & 800 & 19176 & Erdos-Renyi & 2169 & 0.09 & 2478 & 692 & 991 & 0.64 & 2076 & 11.4 \\
G10 & 800 & 19176 & Erdos-Renyi & 2067 & 0.08 & 2407 & 931 & 992 & 0.64 & 2076 & 11.4 \\
G11 & 800 & 1600 & Toroidal & 583 & 0.01 & 669 & 709 & 426 & 0.51 & 619 & 11.5 \\
G12 & 800 & 1600 & Toroidal & 570 & 0.01 & 660 & 993 & 425 & 0.50 & 618 & 11.5 \\
G13 & 800 & 1600 & Toroidal & 596 & 0.01 & 686 & 587 & 424 & 0.50 & 617 & 11.5 \\
G14 & 800 & 4694 & Skew & 3856 & 0.08 & 4012 & 45.7 & 3217 & 0.56 & 3914 & 11.5 \\
G15 & 800 & 4661 & Skew & 3828 & 0.02 & 3984 & 282 & 3215 & 0.56 & 3911 & 11.5 \\
G16 & 800 & 4672 & Skew & 3860 & 0.02 & 3991 & 10.8 & 3214 & 0.56 & 3910 & 11.5 \\
G17 & 800 & 4667 & Skew & 3842 & 0.02 & 3983 & 79.9 & 3213 & 0.56 & 3909 & 11.5 \\
G18 & 800 & 4694 & Skew & 1028 & 0.02 & 1207 & 5.90 & 483 & 0.49 & 952 & 11.4 \\
G19 & 800 & 4661 & Skew & 898 & 0.02 & 1081 & 3.00 & 483 & 0.49 & 952 & 11.4 \\
G20 & 800 & 4672 & Skew & 961 & 0.02 & 1122 & 16.1 & 483 & 0.49 & 952 & 11.4 \\
G21 & 800 & 4667 & Skew & 950 & 0.02 & 1109 & 90.9 & 483 & 0.49 & 952 & 11.4 \\
G22 & 2000 & 19990 & Erdos-Renyi & 16566 & 0.10 & 17167 & 561 & 27385$^*$ & 5.50 & 30400$^*$ & 111 \\
G23 & 2000 & 19990 & Erdos-Renyi & 16504 & 0.10 & 17168 & 888 & 27363$^*$ & 5.50 & 30369$^*$ & 111 \\
G24 & 2000 & 19990 & Erdos-Renyi & 16591 & 0.09 & 17162 & 321 & 27350$^*$ & 5.49 & 30352$^*$ & 111 \\
G25 & 2000 & 19990 & Erdos-Renyi & 16501 & 0.09 & 17163 & 1277 & 27348$^*$ & 5.50 & 30350$^*$ & 111 \\
G26 & 2000 & 19990 & Erdos-Renyi & 16525 & 0.15 & 17154 & 883 & 27379$^*$ & 5.49 & 30393$^*$ & 111 \\
G27 & 2000 & 19990 & Erdos-Renyi & 3371 & 0.09 & 4020 & 577 & 2894 & 1.36 & 5366 & 100 \\
G28 & 2000 & 19990 & Erdos-Renyi & 3287 & 0.09 & 3973 & 766 & 2907 & 1.36 & 5350 & 100 \\
G29 & 2000 & 19990 & Erdos-Renyi & 3418 & 0.16 & 4106 & 286 & 2905 & 1.36 & 5347 & 100 \\
G30 & 2000 & 19990 & Erdos-Renyi & 3441 & 0.09 & 4119 & 1483 & 2902 & 1.36 & 5343 & 100 \\
G31 & 2000 & 19990 & Erdos-Renyi & 3289 & 0.10 & 4003 & 820 & 2901 & 1.36 & 5333 & 100 \\
G32 & 2000 & 4000 & Toroidal & 1446 & 0.02 & 1653 & 522 & 1205 & 1.27 & 1486 & 100 \\
G33 & 2000 & 4000 & Toroidal & 1431 & 0.02 & 1625 & 1233 & 1204 & 1.27 & 1484 & 100 \\
G34 & 2000 & 4000 & Toroidal & 1424 & 0.09 & 1607 & 1752 & 1204 & 1.27 & 1484 & 100 \\
G35 & 2000 & 11778 & Skew & 9673 & 0.05 & 10046 & 1304 & 8038 & 1.36 & 8897 & 102 \\
G36 & 2000 & 11766 & Skew & 9666 & 0.05 & 10039 & 1292 & 8038 & 1.36 & 8898 & 102 \\
G37 & 2000 & 11785 & Skew & 9676 & 0.05 & 10052 & 64.1 & 8038 & 1.36 & 8898 & 102 \\
G38 & 2000 & 11779 & Skew & 9665 & 0.05 & 10040 & 888 & 8038 & 1.36 & 8898 & 102 \\
G39 & 2000 & 11778 & Skew & 2473 & 0.06 & 2903 & 176 & 1773 & 1.29 & 3005 & 102 \\
G40 & 2000 & 11766 & Skew & 2442 & 0.11 & 2870 & 1633 & 1771 & 1.29 & 3002 & 102 \\
G41 & 2000 & 11785 & Skew & 2472 & 0.05 & 2887 & 1729 & 1771 & 1.29 & 3000 & 102 \\
G42 & 2000 & 11779 & Skew & 2571 & 0.06 & 2980 & 48.3 & 1770 & 1.29 & 3001 & 102 \\
G43 & 1000 & 9990 & Erdos-Renyi & 8254 & 0.04 & 8573 & 282 & 16511$^*$ & 2.05 & 18616$^*$ & 47.8 \\
G44 & 1000 & 9990 & Erdos-Renyi & 8250 & 0.10 & 8571 & 706 & 16500$^*$ & 2.05 & 18599$^*$ & 47.8 \\
G45 & 1000 & 9990 & Erdos-Renyi & 8250 & 0.04 & 8566 & 246 & 16501$^*$ & 2.05 & 18599$^*$ & 47.8 \\
G46 & 1000 & 9990 & Erdos-Renyi & 8268 & 0.04 & 8568 & 1061 & 16504$^*$ & 2.05 & 18604$^*$ & 47.8 \\
G47 & 1000 & 9990 & Erdos-Renyi & 8285 & 0.04 & 8572 & 622 & 16501$^*$ & 2.05 & 18600$^*$ & 47.8 \\
G48 & 3000 & 6000 & Toroidal & 6000 & 0.03 & 6000 & 0.30 & 6000 & 5.20 & 5998 & 299 \\
G49 & 3000 & 6000 & Toroidal & 6000 & 0.03 & 6000 & 0.70 & 6000 & 5.20 & 5996 & 394 \\
G50 & 3000 & 6000 & Toroidal & 6000 & 0.03 & 6000 & 116 & 5934 & 6.00 & 5998 & 399 \\
G51 & 1000 & 5909 & Skew & 4852 & 0.02 & 5037 & 945 & 3898 & 1.24 & 4555 & 48.1 \\
G52 & 1000 & 5916 & Skew & 4868 & 0.02 & 5040 & 12.8 & 3898 & 1.24 & 4555 & 48.1 \\
G53 & 1000 & 5914 & Skew & 4858 & 0.02 & 5039 & 307 & 3898 & 1.24 & 4555 & 48.1 \\
G54 & 1000 & 5916 & Skew & 4858 & 0.02 & 5036 & 880 & 3898 & 1.24 & 4555 & 48.1 \\
G55 & 5000 & 12498 & Erdos-Renyi & 12149 & 0.07 & 12429 & 6573 & 50996$^*$ & 64.9 & 54577$^*$ & 934 \\
G56 & 5000 & 12498 & Erdos-Renyi & 4014 & 0.08 & 4752 & 1168 & 6794$^*$ & 10.4 & 12498$^*$ & 940 \\
G57 & 5000 & 10000 & Toroidal & 3574 & 0.06 & 4083 & 5457 & 3912 & 10.4 & 3781 & 940 \\
G58 & 5000 & 29570 & Skew & 24274 & 0.14 & 25195 & 397 & 10232 & 10.4 & 11696 & 940 \\
G59 & 5000 & 29570 & Skew & 6248 & 0.23 & 7262 & 3575 & 2928 & 10.4 & 3762 & 940 \\
G60 & 7000 & 17148 & Erdos-Renyi & 16736 & 0.10 & 17076 & 6745 & 89341$^*$ & 207 & 95057$^*$ & 2583 \\
G61 & 7000 & 17148 & Erdos-Renyi & 5840 & 0.11 & 6853 & 3609 & 8734 & 25.9 & 18612$^*$ & 2603 \\
G62 & 7000 & 14000 & Toroidal & 5021 & 0.09 & 5685 & 6250 & 5785 & 25.8 & 6198 & 2603 \\
G63 & 7000 & 41459 & Skew & 34052 & 0.28 & 35322 & 6547 & 19507 & 27.5 & 22675 & 2605 \\
G64 & 7000 & 41459 & Skew & 9067 & 0.25 & 10443 & 1564 & 5620 & 25.8 & 8938 & 2614 \\
G65 & 8000 & 16000 & Toroidal & 5738 & 0.18 & 6490 & 3078 & 6806 & 33.7 & 7417 & 2936 \\
G66 & 9000 & 18000 & Toroidal & 6590 & 0.17 & 7416 & 5126 & 7781 & 41.4 & 7482 & 3308 \\
G67 & 10000 & 20000 & Toroidal & 7157 & 0.12 & 8086 & 1048 & 3117 & 57.0 & -- & -- \\
G70 & 10000 & 9999 & Erdos-Renyi & 9999 & 0.08 & 9999 & 5.60 & 6832 & 56.6 & -- & -- \\
G72 & 10000 & 20000 & Toroidal & 7194 & 0.19 & 8192 & 6393 & 3849 & 56.6 & -- & -- \\
G77 & 14000 & 28000 & Toroidal & 10304 & 0.24 & 11578 & 1899 & 5118 & 145 & -- & -- \\
G81 & 20000 & 40000 & Toroidal & 14506 & 0.34 & 16321 & 4821 & 5541 & 280 & -- & -- \\
\bottomrule
\end{tabular}
\caption{Comparison of various algorithms for Max-3-Cut on \texttt{GSet} benchmark \cite{Davis2011}. Values for \textsc{MOH} are from \cite{Ma2016}, while those of \textsc{Rank-1} and \textsc{Greedy} are from \cite{stevens2026exploitinglowrankstructuremaxkcut}. Time is in seconds, and $^*$ marks invalid solutions (cut exceeds number of edges).}
\end{table}

\begin{lstlisting}[language=Python]
import heapq, networkx
def dsatur_max_k_cut(G, k, improve=True):
    nodes = list(G.nodes)   
    adj = {v: [] for v in nodes}
    for u, v, d in G.edges(data=True):
        w = d.get('weight', 1)
        adj[u].append((v, w))
        adj[v].append((u, w))
    
    tot_w = {v: sum(abs(w) for _, w in adj[v]) for v in nodes}
    assign, wsum = {}, {v: [0.0]*(k+1) for v in nodes}
    unassigned, cnt, node_cnt, heap = set(nodes), 0, {}, []
    
    def prio(v): return (sum(1 for c in range(1,k+1) if wsum[v][c]), tot_w[v])
    for v in nodes:
        p = prio(v)
        heapq.heappush(heap, (-p[0], -p[1], cnt, v))
        node_cnt[v], cnt = cnt, cnt+1

    while unassigned:
        while heap:
            _, _, c, v = heapq.heappop(heap)
            if v in unassigned and c == node_cnt[v]: break
        
        tot = sum(wsum[v][1:])
        best_c, best_g = 1, tot - wsum[v][1]
        for a in range(2, k+1):
            g = tot - wsum[v][a]
            if g > best_g or (g == best_g and wsum[v][a] < wsum[v][best_c]):
                best_g, best_c = g, a
        
        assign[v] = best_c
        unassigned.discard(v)
        for u, w in adj[v]:
            wsum[u][best_c] += w
            if u in unassigned:
                p = prio(u)
                heapq.heappush(heap, (-p[0], -p[1], cnt, u))
                node_cnt[u], cnt = cnt, cnt+1

    if improve:
        improved = True
        while improved:
            improved = False
            for v in nodes:
                if not adj[v]: continue
                cur, tw = assign[v], sum(w for _, w in adj[v])
                best, best_c = tw - wsum[v][cur], cur
                for a in range(1, k+1):
                    if a != cur and tw - wsum[v][a] > best:
                        best, best_c = tw - wsum[v][a], a
                if best_c != cur:
                    for u, w in adj[v]:
                        wsum[u][cur] -= w
                        wsum[u][best_c] += w
                    assign[v], improved = best_c, True

    cut = sum(w for v in nodes for u, w in adj[v] if assign[v] < assign[u])
    return assign, cut
\end{lstlisting}

\newpage 

\section{Computation of QAOA Edge Expectation Values}\label{supp:derivation_expect}

This Appendix provides a self-contained proof of Proposition~\ref{propEdgeExpectationsQAOANaive} from the main text. The procedure and its proof are generalizations of the method introduced in Ref.~\cite{qaoa_maxcut_high_depth} for evaluating the expected cost Max-$2$-Cut-QAOA. The main modifications are to replace qubits with qudits and allow for an arbitrary product-operator mixer beyond the transverse field mixer. We provide a step-by-step derivation of the procedure outlined in Proposition~\ref{propEdgeExpectationsQAOANaive} in this more general setting.

By reduction from the high-girth case to the tree graph case, the problem of computing an edge expectation $\bra{\bm{\gamma}, \bm{\beta}}\xi\left(Z_u, Z_v\right)\ket{\bm{\gamma}, \bm{\beta}}$ under the graph QAOA state can be reduced to computing the same expectation under a tree QAOA state. More specifically, the underlying tree is obtained as the depth-$p$ neighborhood of $\{u, v\}$ in the original graph. Following an idea originally introduced in Ref.~\cite{farhi2022quantum}, the QAOA expectation is expanded as a spin path integral in the computational basis. To organize the various factors of the path integral pseudo-measure, it will be convenient to use the ``shell-based" labeling for the vertices of the tree as illustrated in Figure~\ref{fig:tree_graph_shell_labelling}. 

\begin{figure}[!htbp]
    \centering
    \includegraphics[width=0.5\linewidth]{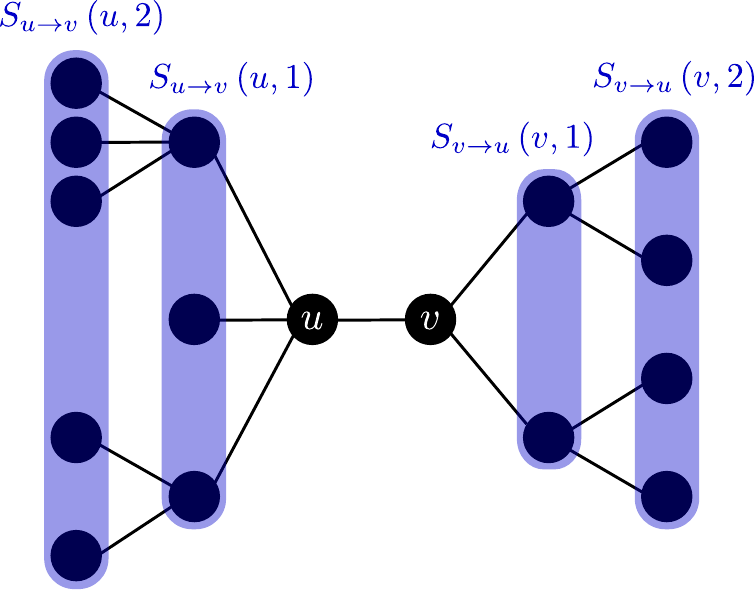}
    \caption{Shell-based labeling of a tree graph, obtained as the $p$-neighborhood ($p = 2$ on the figure) of some edge $\{u, v\}$ in a high-girth graph.}
    \label{fig:tree_graph_shell_labelling}
\end{figure}

We now provide a general representation of edge expectation $\bra{\bm{\gamma}, \bm{\beta}}\xi\left(Z_u, Z_v\right)\ket{\bm{\gamma}, \bm{\beta}}$ as a spin path integral, i.e. as a sum over the computational basis states assumed by each spin after each QAOA layer. For that purpose, for $p$-layers QAOA, we introduce layer index set
\begin{align}
    \mathcal{I}_p & := \left\{1, 2, \ldots, p, p + 1, -p - 1, -p, \ldots, -2, -1\right\}.
\end{align}
Intuitively, the duplication of indices (positive and negative signs) is due to expectation $\bra{\bm{\gamma}, \bm{\beta}}\xi\left(Z_u, Z_v\right)\ket{\bm{\gamma}, \bm{\beta}}$ involving a bra and a ket; the negative indices intervene in the spin path integral decomposition of the ket, the positive indices in the path integral decomposition of the bra. The spin path integral can be expressed as a discrete sum over a bit matrix
\begin{align}
    \bm{z} & = \left(z_j^{[t]}\right)_{\substack{j \in V\\t \in \mathcal{I}_p}},
\end{align}
where for all $j \in V$ and $t \in T$, $z_j^{[t]}$ is the computational basis state of spin $j$ before layer $t$. For each $j \in V$, row $j$ of the bit matrix:
\begin{align}
    \bm{z}_j & := \left(z_j^{[t]}\right)_{t \in \mathcal{I}_p}
\end{align}
gives the computational basis states assumed by spin $j$ before each QAOA layer. For all layer index $t \in \mathcal{I}_p$, column $t$ of the bit matrix:
\begin{align}
    \bm{z}^{[t]} & := \left(z_j^{[t]}\right)_{j \in V}
\end{align}
lists the computational basis states of all spins before layer $t$. The following Lemma expresses the desired edge expectation as a spin path integral for a general tree $\mathcal{T} = \left(V, E\right)$ (in fact, it would also apply to a general graph):

\begin{lemma}[Path integral expansion of QAOA edge expectation for tree problem]
\label{lemma:qaoa_edge_expectation_path_integral_expansion}
For all complex-valued function $\xi: \mathbb{Z}_k \times \mathbb{Z}_k \longrightarrow \mathbf{C}$ to two $k$-ary labels, the expectation of $\xi\left(Z_u, Z_v\right)$ under the tree QAOA state can be expressed as the following spin path integral:
\begin{align}
    \bra{\bm{\gamma}, \bm{\beta}}\xi\left(Z_u, Z_v\right)\ket{\bm{\gamma}, \bm{\beta}} & = \sum_{\bm{z} \in \mathbb{Z}_k^{V \times \mathcal{I}_p}}\xi\left(z^{[p + 1]}_u, z^{[p + 1]}_v\right)\prod_{a \in V}f\left(\bm{z}_a\right)\prod_{\{a, b\} \in E}\Phi_{\{a, b\}}\left(\bm{z}_a, \bm{z}_b\right),\label{eq:qaoa_edge_expectation_path_integral_expansion}
\end{align}
where for all pair of layer-indexed $k$-ary ditstrings $\bm{a}, \bm{b} \in \mathbb{Z}_k^{\mathcal{I}_p}$, we defined:
\begin{align}
    \Phi_{\{a, b\}}\left(\bm{a}, \bm{b}\right) & := \exp\left(\sum_{t \in \mathcal{I}_p}i\Gamma_t\varphi_{\{a, b\}}\left(a_t, b_t\right)\right),
\end{align}
where
\begin{align}
    \bm{\Gamma} & := \left(\Gamma_1, \Gamma_2, \ldots, \Gamma_{p - 1}, \Gamma_p, \Gamma_{p + 1}, \Gamma_{-p - 1}, \Gamma_{-p}, \Gamma_{-(p - 1)}, \ldots, \Gamma_{-2}, \Gamma_{-1}\right)\\
    & = \left(\gamma_1, \gamma_2, \ldots, \gamma_{p - 1}, \gamma_p, 0, 0, -\gamma_p, -\gamma_{p - 1}, \ldots, -\gamma_2, -\gamma_1\right) \in \mathbf{R}^{\mathcal{I}_p}.
\end{align}
We also defined the following function of a single layer-indexed ditstring $\bm{a} \in \mathbb{Z}_k^{\mathcal{I}_p}$:
\begin{align}
    f\left(\bm{a}\right) & := \braket{+}{a_1}\braket{a_{-1}}{+}\prod_{1 \leq t \leq p}\bra{a_t}U_{M, 1}\left(\bm{\beta}_t\right)^{\dagger}\ket{a_{t + 1}}\bra{a_{-t - 1}}U_{M, 1}\left(\bm{\beta}_t\right)\ket{a_{-t}}\\
    & = \frac{1}{k}\prod_{1 \leq t \leq p}\bra{a_t}U_{M, 1}\left(\bm{\beta}_t\right)^{\dagger}\ket{a_{t + 1}}\bra{a_{-t - 1}}U_{M, 1}\left(\bm{\beta}_t\right)\ket{a_{-t}}.
\end{align}
\begin{proof}
We follow the method introduced in Refs.~\cite{farhi2022quantum, qaoa_maxcut_high_depth}, decomposing QAOA expectations as spin path integrals. More specifically, we insert a completeness relation:
\begin{align}
    \bm{I} & = \sum_{\bm{z}^{[t]} \in \mathbb{Z}_k^{V}}\ket{\bm{z}^{[t]}}\bra{\bm{z}^{[t]}},
\end{align}
before each QAOA layer $t$ in the expectation. This provides the following decomposition of the expectation:
\begin{align}
    \bra{\bm{\gamma}, \bm{\beta}}\xi\left(Z_u, Z_v\right)\ket{\bm{\gamma}, \bm{\beta}} & = \bra{+}^{\otimes V}\left(\overrightarrow{\prod_{t = 1}^p}U_C\left(\gamma_t\right)^{\dagger}U_M\left(\bm{\beta}_t\right)^{\dagger}\right)\xi\left(Z_u, Z_v\right)\left(\overleftarrow{\prod_{t = 1}^p}U_M\left(\bm{\beta}_t\right)U_C\left(\gamma_t\right)\right)\ket{+}^{\otimes V}\nonumber\\
    & = \sum_{\bm{z}^{[t]} \in \mathbb{Z}_k^{V}\,\forall t \in \mathcal{I}_p}\bra{+}^{\otimes V}\left(\overrightarrow{\prod_{t = 1}^p}\ket{\bm{z}^{[t]}}\bra{\bm{z}^{[t]}}U_C\left(\gamma_t\right)^{\dagger}U_M\left(\bm{\beta}_t\right)^{\dagger}\right)\nonumber\\
    & \hspace*{80px} \times \ket{\bm{z}^{[p + 1]}}\bra{\bm{z}^{[p + 1]}}\xi\left(Z_u, Z_v\right)\ket{\bm{z}^{[-p - 1]}}\bra{\bm{z}^{[-p - 1]}}\nonumber\\
    & \hspace*{80px} \times \left(\overleftarrow{\prod_{t = 1}^p}U_M\left(\bm{\beta}_t\right)U_C\left(\gamma_t\right)\ket{\bm{z}^{[-t]}}\bra{\bm{z}^{[-t]}}\right)\bra{\bm{z}^{[-1]}}\,\ket{+}^{\otimes V}\nonumber\\
    & = \sum_{\bm{z}^{[t]} \in \mathbb{Z}_k^{V}\,\forall t \in \mathcal{I}_p}\bra{+}^{\otimes V}\ket{\bm{z}^{[1]}}\prod_{t = 1}^p\bra{\bm{z}^{[t]}}U_C\left(\gamma_t\right)^{\dagger}U_M\left(\bm{\beta}_t\right)^{\dagger}\ket{\bm{z}^{[t + 1]}}\nonumber\\
    & \hspace*{80px} \times \xi\left(\bm{z}^{[p + 1]}_u, \bm{z}^{[p + 1]}_v\right)\mathbf{1}\left[\bm{z}^{[p + 1]} = \bm{z}^{[-p - 1]}\right]\nonumber\\
    & \hspace*{80px} \times \bra{\bm{z}^{[-1]}}\,\ket{+}^{\otimes V}\prod_{t = 1}^{p}\bra{\bm{z}^{[-t - 1]}}U_M\left(\bm{\beta}_t\right)U_C\left(\gamma_t\right)\ket{\bm{z}^{[-t]}}.\label{eq:qaoa_edge_expectation_path_integral_expansion_step_1}
\end{align}
The last expression is a product of scalar quantities. We evaluate these elements one by one. For the dot products between the initial state and computational basis states at layers $\pm 1$:
\begin{align}
    \bra{+}^{\otimes V}\ket{\bm{z}^{[1]}} = \bra{\bm{z}^{[-1]}}\,\ket{+}^{\otimes V} = \frac{1}{2^{|V|/2}}.
\end{align}
For the computational basis matrix elements of QAOA unitaries,
\begin{align}
    \bra{\bm{z}^{[-t - 1]}}U_M\left(\bm{\beta}_t\right)U_C\left(\gamma_t\right)\ket{\bm{z}^{[-t]}} & = \bra{\bm{z}^{[-t - 1]}}U_M\left(\bm{\beta}_t\right)\exp\left(-i\gamma_t C\right)\ket{\bm{z}^{[-t]}}\nonumber\\
    & = \bra{\bm{z}^{[-t - 1]}}U_M\left(\bm{\beta}_t\right)\ket{\bm{z}^{[-t]}}\exp\left(-i\gamma_t C\left(\bm{z}^{[-t]}\right)\right),\label{eq:qaoa_unitary_computational_basis_matrix_element}
\end{align}
where the last line, $C\left(\bm{z}^{[-t]}\right)$ refers to the classical cost function evaluated at ditstring $\bm{z}^{[-t]}$. This exponentiated cost function can be further expanded in terms of edge costs $\varphi_{\{a, b\}}$
\begin{align}
    \exp\left(-i\gamma_t C\left(\bm{z}^{[-t]}\right)\right) & = \exp\left(-i\gamma_t\sum_{\{a, b\} \in E}\varphi_{\{a, b\}}\left(\bm{z}^{[-t]}\right)\right)\\
    & = \prod_{\{a, b\} \in E}\exp\left(-i\gamma_t\varphi_{\{a, b\}}\left(\bm{z}^{[-t]}\right)\right).\label{eq:cost_unitary_computational_basis_matrix_element}
\end{align}
As for the matrix elements of the mixer unitary, we use the product form of the mixer to expand it as:
\begin{align}
    \bra{\bm{z}^{[-t - 1]}}U_M\left(\bm{\beta}_t\right)\ket{\bm{z}^{[-t]}} & = \bigotimes_{a \in V}\bra{z^{[-t - 1]}_a}\bigotimes_{a \in V}U_{M, a}\left(\bm{\beta}_t\right)\bigotimes_{a \in V}\ket{z^{[-t]}_a}\nonumber\\
    & = \prod_{a \in V}\bra{z^{[-t - 1]}_a}U_{M, a}\left(\bm{\beta}_t\right)\ket{z^{[-t]}_a}.\label{eq:mixer_unitary_computational_basis_matrix_element}
\end{align}
Plugging Eqns.~\ref{eq:cost_unitary_computational_basis_matrix_element}, \ref{eq:mixer_unitary_computational_basis_matrix_element} into Eq.~\ref{eq:qaoa_unitary_computational_basis_matrix_element} yields the following expression for a single QAOA unitary matrix element:
\begin{align}
    \bra{\bm{z}^{[-t - 1]}}U_M\left(\bm{\beta}_t\right)U_C\left(\gamma_t\right)\ket{\bm{z}^{[-t]}} & = \prod_{a \in V}\bra{\bm{z}^{[-t - 1]}_a}U_{M, a}\left(\bm{\beta}_t\right)\ket{\bm{z}^{[-t]}_a}\prod_{\{a, b\} \in E}\exp\left(-i\gamma_t\varphi_{\{a, b\}}\left(\bm{z}^{[-t]}\right)\right).
\end{align}
Combining all matrix elements together,
\begin{align}
    & \mathbf{1}\left[\bm{z}^{[p + 1]} = \bm{z}^{[-p - 1]}\right]\bra{+}^{\otimes V}\,\ket{\bm{z}^{[1]}}\bra{\bm{z}^{[-1]}}\,\ket{+}^{\otimes V}\prod_{1 \leq t \leq p}\bra{\bm{z}^{[t]}}U_M\left(\bm{\beta}_t\right)^{\dagger}U_C\left(\gamma_t\right)^{\dagger}\ket{\bm{z}^{[t + 1]}}\bra{\bm{z}^{[-t - 1]}}U_M\left(\bm{\beta}_t\right)U_C\left(\gamma_t\right)\ket{\bm{z}^{[-t]}}\nonumber\\
    & = \mathbf{1}\left[\bm{z}^{[p + 1]} = \bm{z}^{[-p - 1]}\right]\frac{1}{2^{|V|}}\prod_{1 \leq t \leq p}\prod_{a \in V}\bra{z^{[t]}_a}U_{M, a}\left(\bm{\beta}_t\right)^{\dagger}\ket{z^{[t + 1]}_a}\bra{z^{[-t - 1]}_a}U_{M, a}\left(\bm{\beta}_t\right)\ket{z^{[-t]}_a}\nonumber\\
    & \hspace*{20px} \times \prod_{1 \leq t \leq p}\prod_{\{a, b\} \in E}\exp\left(i\gamma_t\varphi_{\{a, b\}}\left(\bm{z}^{[t]}_a, \bm{z}^{[t]}_b\right) - i\gamma_t\varphi_{\{a, b\}}\left(\bm{z}^{[-t]}, \bm{z}^{[-t]}\right)\right)\nonumber\\
    & = \prod_{a \in V}\frac{1}{2}\mathbf{1}\left[z^{[p + 1]}_a = z^{[-p - 1]}_a\right]\prod_{1 \leq t \leq p}\bra{z^{[t]}_a}U_{M, a}\left(\bm{\beta}_t\right)^{\dagger}\ket{z^{[t + 1]}_a}\bra{z^{[-t - 1]}_a}U_{M, a}\left(\bm{\beta}_t\right)\ket{z^{[-t]}_a}\nonumber\\
    & \hspace*{20px} \times \prod_{\{a, b\} \in E}\exp\left(\sum_{1 \leq t \leq p}i\gamma_t\varphi_{\{a, b\}}\left(\bm{z}^{[t]}_a, \bm{z}^{[t]}_b\right) - \sum_{1 \leq t \leq p}i\gamma_t\varphi_{\{a, b\}}\left(\bm{z}^{[t]}_a, \bm{z}^{[t]}_b\right)\right)\nonumber\\
    & = \prod_{a \in V}f\left(\bm{z}_a\right)\prod_{\{a, b\} \in E}\Phi_{\{a, b\}}\left(\bm{z}_a, \bm{z}_b\right),
\end{align}
with $f$ and $\Phi_{\{a, b\}}$ defined as in the Lemma's statement.
\end{proof}
\end{lemma}

Eq.~\ref{eq:qaoa_edge_expectation_path_integral_expansion} provided in Lemma~\ref{lemma:qaoa_edge_expectation_path_integral_expansion} for the QAOA edge expectation $\bra{\bm{\gamma}, \bm{\beta}}\xi\left(Z_u, Z_v\right)\ket{\bm{\gamma}, \bm{\beta}}$ formally presents as a Gibbs partition function involving only two-body terms on a tree graph. This expression can be explicitly evaluated using the standard Belief Propagation equations over a tree. To be self-contained and connect to the notations used in Ref~\cite{qaoa_maxcut_high_depth} for computing the expected cost of Max-Cut QAOA on qubits, we rephrase these equations in the following Proposition:

\begin{proposition}[Evaluating QAOA edge expectation for tree problem]
\label{prop:qaoa_edge_expectation_tree_problem}
Consider the spin path-integral representation established in Lemma~\ref{lemma:qaoa_edge_expectation_path_integral_expansion} for an edge diagonal observable under the tree QAOA state:
\begin{align}
    \bra{\bm{\gamma}, \bm{\beta}}\xi\left(Z_u, Z_v\right)\ket{\bm{\gamma}, \bm{\beta}} & = \sum_{\bm{z} \in \mathbb{Z}_k^{V \times \mathcal{I}_p}}\xi\left(z^{[p + 1]}_u, z^{[p + 1]}_v\right)\prod_{a \in V}f\left(\bm{z}_a\right)\prod_{\{a, b\} \in E}\Phi_{\{a, b\}}\left(\bm{z}_a, \bm{z}_b\right).
\end{align}
This sum can be evaluated by the following procedure. First, temporarily assuming edge $\{u, v\}$ is cut from the tree, split vertices into the ones in the connected component of $u$: $V_{u \to v}$ and those in the connected component of $v$: $V_{v \to u}$, so that $V = V_{u \to v} \sqcup V_{v \to u}$. These vertex subsets can further be separated into ``shells" of increasing distances to $u$, $v$, respectively:
\begin{align}
    V_{u \to v} & = \bigsqcup_{r = 0}^pS_{u \to v}\left(u, r\right),\\
    V_{v \to u} & = \bigsqcup_{r = 0}^pS_{v \to u}\left(v, r\right).
\end{align}
For each vertex $a_{u \to v} \in V_{u \to v}$ in the connected component of $u$, we now define a tensor $H_{a_{u \to v}}\left(\bm{c}\right)$, $\bm{c} \in \mathbb{Z}_k^{\mathcal{I}_p}$. The definition is done by decreasing induction on the depth of $a_{u \to v}$, i.e. on the $r$ such that $a_{u \to v} \in S_{u \to v}\left(u, r\right)$. For the maximum depth $p$ case, we let:
\begin{align}
    H_{a_{u \to v}}\left(\bm{c}\right) & := 1, \qquad a_{u \to v} \in S_{u \to v}(u, p).
\end{align}
For any depth $1 \leq q \leq p - 1$, we define:
\begin{align}
    H_{a_{u \to v}}\left(\bm{c}\right) & = \prod_{\substack{b_{u \to v} \in S_{u \to v}\left(r, q + 1\right)\\\{a_{u \to v}, b_{u \to v}\} \in E}}\,\sum_{\bm{c} \in \mathbb{Z}_k^{\mathcal{I}_p}}f\left(\bm{c}\right)H_{b_{u \to v}}\left(\bm{c}\right)\Phi_{\{a, b_{u \to v}\}}\left(\bm{c}\right), \qquad a_{u \to v} \in S_{u \to v}\left(u, q\right), \quad 1 \leq q \leq p - 1.
\end{align}
We similarly define quantities $H_{a_{v \to u}}$ for any $a_{v \to u} \in V_{v \to u}$. Then, the QAOA edge expectation can be expressed:
\begin{align}
    \bra{\bm{\gamma}, \bm{\beta}}\xi\left(Z_u, Z_v\right)\ket{\bm{\gamma}, \bm{\beta}} & = \sum_{\bm{c}, \bm{d} \in \mathbb{Z}_k^{\mathcal{I}_p}}\xi\left(c_{p + 1}, d_{p + 1}\right)\Phi_{\{u, v\}}\left(\bm{c}, \bm{d}\right)f\left(\bm{c}\right)H_{u}\left(\bm{c}\right)f\left(\bm{d}\right)H_v\left(\bm{d}\right).
\end{align}
\begin{proof}
We first introduce a shorthand notations for vertices within a distance $r$ of $u$ or $v$ in the tree with edge $\{u, v\}$ cut:
\begin{align}
    B'\left(r\right) & := B_{u \to v}\left(u, r\right) \sqcup B_{v \to u}\left(v, r\right).
\end{align}
In particular, all tree vertices lie in $B'\left(p\right)$, and $B'\left(0\right) = \{u, v\}$. Likewise, we introduce a shorthand notation for the vertices at a distance exactly $r$ from $u$ or $v$ in the relevant half-tree:
\begin{align}
    S'\left(r\right) & := S_{u \to v}\left(u, r\right) \sqcup S_{v \to u}\left(v, r\right).
\end{align}
We will prove by decreasing induction on $q$ that for all $0 \leq q \leq p$,
\begin{align}
    \bra{\bm{\gamma}, \bm{\beta}}\xi\left(Z_u, Z_v\right)\ket{\bm{\gamma}, \bm{\beta}} & = \sum_{\bm{z}_a \in \mathbb{Z}_k^{\mathcal{I}_p}\,\forall a \in B'\left(q\right)}\xi\left(z^{[p + 1]}_u, z^{[p + 1]}_v\right)\prod_{a \in S'(q)}H_a\left(\bm{z}_a\right)\prod_{a \in B'(q)}f\left(\bm{z}_a\right)\prod_{\substack{\{a, b\} \in E\\a, b \in B'(q)}}\Phi_{\{a, b\}}\left(\bm{z}_a, \bm{z}_b\right).
\end{align}
In words, in the above expression, all tensors depending on vertices of depth at least $q + 1$ have been contracted into tensors $H\left(\bm{z}_a\right)$. We will show partial summation over spin paths $\bm{z}_a$, $a \in B'(p) - B'(q)$ indeed produces an expression of this form. Note that for $q = p$, $B'(q) = V$ and (by definition) $H_a\left(\bm{z}_a\right) = 1$ for all $a \in S'(q) = S'(p)$; hence, the above expression matches the one established in Lemma~\ref{lemma:qaoa_edge_expectation_path_integral_expansion} for the edge expectation. As for $q = 0$, the above expression becomes the desired conclusion of the Lemma. We now prove validity of the expression for all $q$ by descending induction on $0 \leq q \leq p$; the idea is to contract level $q + 1$ of the tree at induction level $q$.

For $q = p$, as previously observed, the formula is true by Lemma~\ref{lemma:qaoa_edge_expectation_path_integral_expansion}. Let us now assume $0 \leq q \leq p - 1$ and validity of the formula at level $q + 1$. That is, assume
\begin{align}
    \bra{\bm{\gamma}, \bm{\beta}}\xi\left(Z_u, Z_v\right)\ket{\bm{\gamma}, \bm{\beta}} & = \sum_{\bm{z}_a \in \mathbb{Z}_k^{\mathcal{I}_p}\,\forall a \in B'(q + 1)}\xi\left(z^{[p + 1]}_u, z^{[p + 1]}_v\right)\prod_{a \in S'(q + 1)}H_a\left(\bm{z}_a\right)\prod_{a \in B'(q + 1)}f\left(\bm{z}_a\right)\prod_{\substack{\{a, b\} \in E\\a, b \in B'(q + 1)}}\Phi_{\{a, b\}}\left(\bm{z}_a, \bm{z}_b\right).\label{eq:qaoa_edge_expectation_path_integral_expansion_induction_hypothesis}
\end{align}
We reorganize the general term of this sum by splitting vertices $B'(q + 1)$ into those of distance exactly $q + 1$ to relevant vertex $u, v$, and those of distance at most $q$:
\begin{align}
    B'(q + 1) & = S'(q + 1) \sqcup B'(q).
\end{align}
The products in the sum's general term then become:
\begin{align}
    & \prod_{a \in S'(q + 1)}f\left(\bm{z}_a\right)H_a\left(\bm{z}_a\right) \times \prod_{a \in B'(q)}f\left(\bm{z}_a\right) \times \prod_{\substack{\{a, b\} \in E\\a \in S'(q),\,b \in S'(q + 1)}}\Phi_{\{a, b\}}\left(\bm{z}_a, \bm{z}_b\right) \times \prod_{\substack{\{a, b\} \in E\\a, b \in B'(q)}}\Phi_{\{a, b\}}\left(\bm{z}_a, \bm{z}_b\right)\nonumber\\
    & = \prod_{\substack{a, b \in E\\a \in S'(q), b \in S'(q + 1)}}f\left(\bm{z}_b\right)H_b\left(\bm{z}_b\right)\Phi_{\{a, b\}}\left(\bm{z}_a, \bm{z}_b\right) \times \prod_{a \in B'(q)}f\left(\bm{z}_a\right) \times \prod_{\substack{\{a, b\} \in E\\a, b \in B'(q)}}\Phi_{\{a, b\}}\left(\bm{z}_a, \bm{z}_b\right).
\end{align}
We then plug this decomposition into Eq.~\ref{eq:qaoa_edge_expectation_path_integral_expansion_induction_hypothesis} and partially sum over $\bm{z}_b$ for all $b \in S'(q + 1)$; note the summation does not affect $\xi\left(z_u^{[p + 1]}, z_v^{[p + 1]}\right)$, since $q + 1 \geq 1$, i.e. one only sums over paths of vertices at a distance at least $1$ from $u$ or $v$. This yields:
\begin{align}
    \bra{\bm{\gamma}, \bm{\beta}}\xi\left(Z_u, Z_v\right)\ket{\bm{\gamma}, \bm{\beta}} & = \sum_{\bm{z}_a \in \mathbb{Z}_k^{\mathcal{I}_p}\,\forall a \in B'(q)}\xi\left(z^{[p + 1]}_u, z^{[p + 1]}_v\right)\prod_{a \in B'(q)}f\left(\bm{z}_a\right)\prod_{\substack{\{a, b\} \in E\\a, b \in B'(q)}}\Phi_{\{a, b\}}\left(\bm{z}_a, \bm{z}_b\right)\nonumber\\
    & \hspace*{80px} \times \sum_{\bm{z}_b \in \mathbb{Z}_k^{\mathcal{I}_p}\,\forall b \in S'(q + 1)}\prod_{\substack{\{a, b\} \in E\\a \in S'(q),\,b \in S'(q + 1)}}f\left(\bm{z}_b\right)H_b\left(\bm{z}_b\right)\Phi_{\{a, b\}}\left(\bm{z}_a, \bm{z}_b\right).\label{eq:qaoa_edge_expectation_path_integral_expansion_step_1}
\end{align}
The innermost product can be reindexed by enumerating over $a \in S'(q)$:
\begin{align}
    \prod_{\substack{\{a, b\} \in E\\a \in S'(q),\,b \in S'(q + 1)}}f\left(\bm{z}_b\right)H_b\left(\bm{z}_b\right)\Phi_{\{a, b\}}\left(\bm{z}_a, \bm{z}_b\right) & = \prod_{a \in S'(q)}\prod_{\substack{b \in S'(q + 1)\,:\,\{a, b\} \in E}}f\left(\bm{z}_b\right)H_b\left(\bm{z}_b\right)\Phi_{\{a, b\}}\left(\bm{z}_a, \bm{z}_b\right).
\end{align}
Namely, for each vertex $a \in S'(q) = S_{u \to v}\left(u, q\right) \sqcup S_{v \to u}\left(v, q\right)$ of depth $q$ with respect to $u$ or $v$, we enumerate its descendant depth-$(q + 1)$ vertices $b$ from $S'(q + 1) = S_{u \to v}\left(u, q + 1\right) \sqcup S_{v \to u}\left(v, q + 1\right)$. Each vertex $b \in S'(q + 1)$ is enumerated exactly once in this way. It follows that the sum over $\bm{z}_b$, $b \in S'(q + 1)$ factorizes:
\begin{align}
    \sum_{\bm{z}_b \in \mathbb{Z}_k^{\mathcal{I}_p}\,\forall b \in S'(q + 1)}\prod_{\substack{\{a, b\} \in E\\a \in S'(q),\,b \in S'(q + 1)}}f\left(\bm{z}_b\right)H_b\left(\bm{z}_b\right)\Phi\left(\bm{z}_a, \bm{z}_b\right) & = \prod_{a \in S'(q)}\,\prod_{b \in S'(q + 1)\,:\,\{a, b\} \in E}\,\sum_{\bm{z}_b \in \mathbb{Z}_k^{\mathcal{I}_p}}f\left(\bm{z}_b\right)H_b\left(\bm{z}_b\right)\Phi_{\{a, b\}}\left(\bm{z}_a, \bm{z}_b\right)\nonumber\\
    & = \prod_{a \in S'(q)}\,\prod_{b \in S'(q + 1)\,:\,\{a, b\} \in E}\,\sum_{\bm{c} \in \mathbb{Z}_k^{\mathcal{I}_p}}f\left(\bm{c}\right)H_b\left(\bm{c}\right)\Phi_{\{a, b\}}\left(\bm{z}_a, \bm{c}\right)\nonumber\\
    & = \prod_{a \in S'(q)}H_a\left(\bm{z}_a\right),
\end{align}
where in the final line, we used the recursive definition of $H_a$ (in terms of the depth of $a$ with respect to $u$ or $v$). Plugging this identity into Eq.~\ref{eq:qaoa_edge_expectation_path_integral_expansion_step_1} yields:
\begin{align}
    \bra{\bm{\gamma}, \bm{\beta}}\xi\left(Z_u, Z_v\right)\ket{\bm{\gamma}, \bm{\beta}} & = \sum_{\bm{z}_a \in \mathbb{Z}_k^{\mathcal{I}_p}\,\forall a \in B'(q)}\xi\left(z^{[p + 1]}_u, z^{[p + 1]}_v\right)\prod_{a \in S'(q)}H_a\left(\bm{z}_a\right)\prod_{a \in B'(q)}f\left(\bm{z}_a\right)\prod_{\substack{\{a, b\} \in E\\a, b \in B'(q)}}\Phi_{\{a, b\}}\left(\bm{z}_a, \bm{z}_b\right),
\end{align}
which is the desired induction hypothesis at level $q$. This concludes the proof.
\end{proof}
\end{proposition}

Proposition~\ref{prop:qaoa_edge_expectation_tree_problem} applies to evaluate the expectation of any diagonal observable $\xi\left(Z_u, Z_v\right)$, under a QAOA state defined on a tree of arbitrary geometry, with possibly non-uniform edge costs $\varphi_{\{a, b\}}$. The time complexity of evaluating the QAOA expectation under this procedure $\mathcal{O}\left(|V|k^{2|\mathcal{I}_p|}\right) = \mathcal{O}\left(|V|k^{4p + 4}\right)$. We now show explicitly that when:
\begin{itemize}
    \item the edge costs $\varphi_{\{a, b\}}$ are uniform, i.e. independent of edge $\{a, b\}$;
    \item tree $\mathcal{T}$ is $(d + 1)$-regular, i.e. all vertices except the leaves (vertices of depth $p$ with respect to $u$ or $v$) have exactly $(d + 1)$ neighbors;
\end{itemize}
then $H_a$ only depends on the depth of $a$ and the time complexity of the procedure reduces to $\mathcal{O}\left(pk^{4p + 4}\right)$. Namely, $|V| = \mathcal{O}\left(d^p\right)$ is replaced by $p$ in the time complexity. This ultimately corresponds to the special case treated in Ref.~\cite{qaoa_maxcut_high_depth}. Thus, we arrive at the following proposition which is a restatement of Proposition~\ref{propEdgeExpectationsQAOANaive} from the main text. 


\begin{proposition}[Evaluating QAOA edge expectation for $(d + 1)$-regular tree problem with uniform edge costs]
\label{
}
Recall the setting of Proposition~\ref{prop:qaoa_edge_expectation_tree_problem}, and further assume:
\begin{itemize}
    \item Edge costs are uniform, i.e. $\varphi_{\{a, b\}} = \varphi$ for some function $\varphi$ independent of $\{a, b\}$.
    \item The tree is $(d + 1)$-regular: all vertices except leaves, i.e. all vertices except $S_{u \to v}\left(u, p\right) \sqcup S_{v \to u}\left(v, p\right)$, have exactly $(d + 1)$ neighbors.
\end{itemize}
Then, tensor $H_a$ defined recursively in Proposition~\ref{prop:qaoa_edge_expectation_tree_problem} only depends on the depth of $a$:
\begin{align}
    H_{a_{u \to v}}\left(\bm{c}\right) & = H^{(p - r)}\left(\bm{c}\right) \qquad \forall a_{u \to v} \in S_{u \to v}\left(u, r\right)\label{eq:h_tensor_iteration_uniform_specialization_u_side},\\
    H_{a_{v \to u}}\left(\bm{c}\right) & = H^{(p - r)}\left(\bm{c}\right) \qquad \forall a_{v \to u} \in S_{v \to u}\left(v, r\right),\label{eq:h_tensor_iteration_uniform_specialization_v_side}
\end{align}
for some family of tensors $H^{(r)}\left(\bm{c}\right)$, $\bm{c} \in \mathbb{Z}_k^{\mathcal{I}_p}$, defined recursively by:
\begin{align}
    H^{(0)}\left(\bm{c}\right) & = 1,\label{eq:h_tensor_iteration_uniform_specialization_initialization}\\
    H^{(r + 1)}\left(\bm{c}\right) & = \left(\sum_{\bm{d} \in \mathbb{Z}_k^{\mathcal{I}_p}}f\left(\bm{d}\right)H^{(r)}\left(\bm{d}\right)\Phi\left(\bm{c}, \bm{d}\right)\right)^d, \qquad 0 \leq r \leq p - 1.\label{eq:h_tensor_iteration_uniform_specialization_recursion}
\end{align}
where
\begin{align}
    \Phi\left(\bm{c}, \bm{d}\right) & := \exp\left(\sum_{t \in \mathcal{I}_p}i\Gamma_t\varphi\left(c_t, d_t\right)\right)\nonumber\\
    & = \exp\left(\sum_{1 \leq t \leq p}i\gamma_t\varphi\left(c_t, d_t\right) - \sum_{1 \leq t \leq p}i\gamma_t\varphi\left(c_{-t}, d_{-t}\right)\right).
\end{align}
Hence, the edge expectation can be computed as:
\begin{align}
    \bra{\bm{\gamma}, \bm{\beta}}\xi\left(Z_u, Z_v\right)\ket{\bm{\gamma}, \bm{\beta}} & = \sum_{\bm{c}, \bm{d} \in \mathbb{Z}_k^{\mathcal{I}_p}}\xi\left(\bm{c}, \bm{d}\right)\Phi\left(\bm{c}, \bm{d}\right)f\left(\bm{c}\right)H^{(p)}\left(\bm{c}\right)f\left(\bm{d}\right)H^{(p)}\left(\bm{d}\right).
\end{align}
\begin{proof}
We restrict to proving Eq.~\ref{eq:h_tensor_iteration_uniform_specialization_u_side}, Eq.~\ref{eq:h_tensor_iteration_uniform_specialization_v_side} being deduced by exchanging the roles of $u$ and $v$. We then seek to prove that
\begin{align}
    H_{a_{u \to v}}\left(\bm{c}\right) & = H^{(p - r)}\left(\bm{c}\right), \qquad \forall a_{u \to v} \in S_{u \to v}\left(u, r\right), \quad \forall \bm{c} \in \mathbb{Z}_k^{\mathcal{I}_p}
\end{align}
for some family of tensors $H^{(r - p)}$ indexed by tree depth and satisfying recursion Eqns.~\ref{eq:h_tensor_iteration_uniform_specialization_initialization}, \ref{eq:h_tensor_iteration_uniform_specialization_recursion}. We prove this by decreasing induction on $0 \leq r \leq p$. For $r = p$, letting $a_{u \to v} \in S_{u \to v}\left(u, p\right)$, it holds $H_{a_{u \to v}}\left(\bm{c}\right) = 1$ by the definition from Proposition~\ref{prop:qaoa_edge_expectation_tree_problem}, and $H^{(p - r)}\left(\bm{c}\right) = H^{(0)}\left(\bm{c}\right) = 1$ by the current Proposition's definition; hence, equality holds between the two quantities. Let us then assume equality Eq.~\ref{eq:h_tensor_iteration_uniform_specialization_u_side} to hold for all $r$ at least $q + 1$ ($0 \leq q \leq p - 1$). Let $a_{u \to v} \in S_{u \to v}\left(u, q\right)$. According to the recursive definition of $H_{a_{u \to v}}$ from Proposition~\ref{prop:qaoa_edge_expectation_tree_problem},
\begin{align}
    H_{a_{u \to v}}\left(\bm{c}\right) & = \prod_{\substack{b_{u \to v} \in S_{u \to v}\left(u, q + 1\right)\\\{a_{u \to v}, b_{u \to v}\} \in E}}\,\sum_{\bm{d} \in \mathbb{Z}_k^{\mathcal{I}_p}}f\left(\bm{d}\right)H_b\left(\bm{d}\right)\Phi\left(\bm{c}, \bm{d}\right)\nonumber\\
    & = \prod_{\substack{b_{u \to v} \in S_{u \to v}\left(u, q + 1\right)\\\{a_{u \to v}, b_{u \to v}\} \in E}}\,\sum_{\bm{d} \in \mathbb{Z}_k^{\mathcal{I}_p}}f\left(\bm{d}\right)H^{(q + 1)}\left(\bm{d}\right)\Phi\left(\bm{c}, \bm{d}\right)\nonumber\\
    & = \left(\sum_{\bm{d} \in \mathbb{Z}_k^{\mathcal{I}_p}}f\left(\bm{d}\right)H^{(p - q - 1)}\left(\bm{d}\right)\Phi\left(\bm{c}, \bm{d}\right)\right)^d\nonumber\\
    & = H^{(p - q)}\left(\bm{c}\right).
\end{align}
where in the third equality, we used that all factors of the product are identical, and the product has $d$ factors by the $(d + 1)$-regular assumption on the tree. This completes the induction step and the proof.
\end{proof}
\end{proposition}

\section{Max-$k$-Cut QAOA on high-girth regular graphs}

In this section of the Appendix, we collect technical results pertaining to the analysis of Max-$k$-Cut QAOA on high-girth regular graphs outlined in Section~\ref{sec:max_k_cutting_qaoa_high_girth_further_analysis} of the main text.

\begin{replemma}{lemma:matrix_vector_multiplication_hadamard}
Consider a matrix $\bm{M} \in \mathbb{C}^{k^n \times k^n}$ with entries (indexed by ditstrings) of the form:
\begin{align}
    M_{\bm{a}, \bm{b}} & = m\left(\bm{a} - \bm{b}\right),
\end{align}
where $m: \mathbb{Z}_k^n \longrightarrow \mathbb{C}$ can be regarded as an arbitrary complex vector indexed by $\mathbb{Z}_k^n$. Given $m$, there exists an algorithm computing matrix-vector product $\bm{M}\bm{u}$ for any vector $\bm{u} \in \left(\mathbb{C}^k\right)^{\otimes n}$ in time $\mathcal{O}\left(k^n\log\left(k^n\right)\right)$ and memory $\mathcal{O}\left(k^n\right)$.

More specifically, denoting by $u: \mathbb{Z}_k^n \longrightarrow \mathbb{C}$ the coordinate function of vector $\bm{u}$, it holds:
\begin{align}
    \left[\bm{M}\bm{u}\right]_{\bm{a}} & = \widetilde{\hat{m}\hat{u}}\left(\bm{a}\right),
\end{align}
\begin{proof}
Let $u: \mathbb{Z}_k^n \longrightarrow \mathbb{C}$ be a complex-valued function of a ditstring, representing a vector. Recalling unitarity of the Hadamard transform (Definition~\ref{def:hadamard_transform}), $m$ can be written in terms of its Hadamard transform as:
\begin{align}
    m\left(\bm{a} - \bm{b}\right) & = \frac{1}{k^{n/2}}\sum_{\bm{x} \in \mathbb{Z}_k^n}e^{\frac{2\pi i}{k}\bm{x}\cdot\left(\bm{a} - \bm{b}\right)}\hat{m}\left(\bm{x}\right).
\end{align}
Likewise,
\begin{align}
    u\left(\bm{b}\right) & = \frac{1}{k^{n/2}}\sum_{\bm{y} \in \mathbb{Z}_k^n}e^{\frac{2\pi i}{k}\bm{y} \cdot \bm{b}}\hat{u}\left(\bm{y}\right)
\end{align}
We then compute:
\begin{align}
    \sum_{\bm{b} \in \mathbb{Z}_k^n}M_{\bm{a}, \bm{b}}u\left(\bm{b}\right) & = \sum_{\bm{b} \in \mathbb{Z}_k^n}m\left(\bm{a} - \bm{b}\right)u\left(\bm{b}\right)\nonumber\\
    & = \sum_{\bm{b}, \bm{x}, \bm{y} \in \mathbb{Z}_k^n}\frac{1}{k^n} e^{\frac{2\pi i}{k}\left(\bm{x} \cdot \left(\bm{a} - \bm{b}\right) + \bm{y} \cdot \bm{b}\right)}\hat{m}\left(\bm{x}\right)\hat{u}\left(\bm{y}\right)\nonumber\\
    & = \sum_{\bm{x}, \bm{y} \in \mathbb{Z}_k^n}\mathbf{1}\left[\bm{x} = \bm{y}\right]e^{\frac{2\pi i}{k}\bm{x} \cdot \bm{a}}\hat{m}\left(\bm{x}\right)\hat{u}\left(\bm{y}\right)\nonumber\\
    & = \sum_{\bm{x} \in \mathbb{Z}_k^n}e^{\frac{2\pi i}{k}\bm{x} \cdot \bm{a}}\hat{m}\left(x\right)\hat{u}\left(x\right)\nonumber\\
    & = k^{n/2}\widetilde{\left(\hat{m}\hat{u}\right)}\left(\bm{a}\right).
\end{align}
Hence, the matrix-vector product can be computed by performing (inverse) Hadamard transform and element-wise multiplication of vectors of dimensions $k^n$, requiring memory $\mathcal{O}\left(k^n\right)$ and time $\mathcal{O}\left(k^n\log\left(k^n\right)\right)$ according to Theorem~\ref{th:hadamard_transform_efficient_computation}.
\end{proof}
\end{replemma}

\end{document}